\documentclass[journal]{IEEEtran}
\usepackage{hyperref}
\usepackage{color}
\usepackage{amsmath}
\usepackage{bm}
\usepackage{amssymb}
\usepackage{graphicx}
\usepackage{amsthm}
\usepackage{cite}
\usepackage[ruled,linesnumbered]{algorithm2e}
\usepackage{multirow}
\usepackage{diagbox}
\usepackage{subfigure}
\usepackage{hyperref}
\usepackage[figure]{hypcap}
\usepackage{multicol}
\usepackage{cuted} 
\usepackage{balance}  
\usepackage[table]{xcolor}    
\usepackage{booktabs}    
\usepackage{tabularx}
\usepackage{makecell}
\usepackage{array}            
\hypersetup{
  colorlinks=true,
  linkcolor=blue,
  citecolor=blue,
  urlcolor=blue
}
\usepackage{cleveref}
\usepackage{upgreek}
\usepackage{threeparttable} 

\newtheorem{proposition}{Proposition}

\newcolumntype{Y}{>{\centering\arraybackslash}X}

\begin{document}

\title{An Adaptive Antenna Impedance Matching \\Method via Deep Reinforcement Learning
}

\author{Guoquan Zhang, Wendong Cheng, Weidong Wang, and Li Chen 
\thanks{Guoquan Zhang, Wendong Cheng, Weidong Wang, and Li Chen are with the CAS Key Laboratory of Wireless Optical Communication, University of Science and Technology of China (USTC), Hefei 230027, China (e-mail: zgq2002@mail.ustc.edu.cn; cwd01@mail.ustc.edu.cn; wdwang@ustc.edu.cn; chenli87@ustc.edu.cn;).}
}



\maketitle

\begin{abstract}
Adaptive impedance matching between antennas and radio frequency front-end modules is critical for maximizing power transmission efficiency in mobile communication systems. Conventional numerical and analytical methods struggle with a trade-off between accuracy and efficiency, while deep neural network (DNN)-based supervised learning approaches rely heavily on large labeled datasets and lack flexibility for dynamic environments. To address these limitations, this paper proposes a deep reinforcement learning (DRL)-based approach for adaptive impedance matching. First, we model the impedance tuning problem as an optimal control problem, proving the feasibility of solving the optimal control law via reinforcement learning. Then, we design a tailored DRL framework for impedance tuning, which employs a compact state representation that integrates key frequency characteristics and matching quality metrics.
Additionally, this framework incorporates a piecewise reward function that accounts for both matching accuracy and tuning speed. Furthermore, a test-phase exploration mechanism is introduced to enhance tuning stability, which effectively reduces local optimal trapping and high-frequency tuning variance. Experimental results demonstrate that the proposed method achieves superior performance in terms of tuning accuracy, efficiency, and stability compared with conventional heuristic and gradient-based methods, making it promising for practical impedance tuning systems.
\end{abstract}
\begin{IEEEkeywords}
Adaptive impedance matching, tunable matching network, deep reinforcement learning.
\end{IEEEkeywords}
\section{Introduction}
\IEEEPARstart{I}{mpedance} matching is a crucial technology in radio frequency (RF) circuits, aiming to maximize power transfer efficiency \cite{intro,rf,rf2}. In mobile communication systems, impedance mismatch between the antenna and RF front-end (RFFE) degrades signal quality, shortens battery life, impairs power amplifier linearity \cite{zenteno2015output, Wangxiaoyu2022Digital}, and may even damage sensitive RFFE components \cite{van2007power}. Therefore, impedance matching is indispensable for reliable, high-performance operation in modern mobile devices.

Moreover, the antenna impedance in mobile devices is inherently dynamic, affected by a multitude of real-world factors: operating frequency \cite{alibakhshikenari2019automated}, variations in user holding postures \cite{boyle2003performance, ogawa2001analysis}, user proximity effects \cite{boyle2007analysis, adams2023miniaturized}, and even user age and clothing \cite{sacco2021antenna}. These dynamic conditions induce persistent impedance mismatches, which reduce the power delivered to the antenna and threaten the long-term reliability of the
entire communication system. Given this dynamic operating environment, adaptive impedance matching techniques have emerged as a critical research focus.

For adaptive impedance matching, conventional analytical methods obtain the optimal matching parameters through theoretical derivation based on the circuit structure and actual impedance measurements.
To address antenna mismatch in mobile phones caused by fluctuating body effects, the authors in \cite{lnetwork} developed a generic quadrature detector to achieve power-independent orthogonal measurement of complex impedance, enabling the direct adjustment of tunable capacitors. Additionally, the work of \cite{match} proposed an analytical method that directly computes the optimal component values of the matching network based on the measured load impedance and circuit model. Further, the authors in \cite{analytical} proposed a matching method for $\pi$-network impedance tuners, which uses closed-form formulas to achieve impedance matching within finite tuning ranges. Analytical methods avoid tedious iterative searches, achieving high matching efficiency. As analytical
methods are inherently model-dependent, their accuracy is limited by discrepancies between the assumed circuit model and the actual physical system.

To overcome these model-dependent limitations, numerical iterative optimization methods have been widely adopted for adaptive impedance matching. These methods utilize real-time feedback signals indicating the level of mismatch to search for optimal matching parameters through iterative adjustments. Gradient descent algorithm utilizes the gradient information to drive stepwise parameter updates \cite{Xiong2016Unimodal, ogawa2003automatic}, yet it often suffers from slow convergence and is prone to stagnation in local optima. To eliminate the reliance on gradient information, several gradient-free optimization methods, including the Powell algorithm and the Single-step algorithm, are adopted in \cite{de2004rf} to minimize the reflection coefficient magnitude. As a major category of gradient-free techniques, heuristic methods are widely employed for impedance matching, as they iteratively search for optimal matching parameters through intelligent heuristic strategies. Typical examples include genetic algorithms (GA) \cite{refga} and its variant \cite{refgaimproved}, which exhibit high complexity due to the procedures of selection, crossover and mutation. In contrast, particle swarm optimization (PSO) \cite{pso} is much simpler than GA as it lacks these genetic operations. To alleviate premature convergence and local optimum trapping, a simulated annealing particle swarm optimization (SAPSO) algorithm was proposed in \cite{sapso} for impedance matching, which incorporates the simulated annealing (SA) mechanism into the PSO framework. In addition, to accelerate convergence speed and reduce the hardware cost of feedback circuits, the authors in \cite{xiong2019novel} proposed a binary search tuning scheme based on linear fractional transformation. The drawback of numerical iteration methods lies in their inherent inefficiency, as the trial-and-error search process incurs significant tuning latency and computational overhead.

Recently, advanced artificial intelligence (AI) techniques have been applied to achieve efficient and accurate adaptive impedance matching. For frequency-domain impedance mismatch, Kim and Bang \cite{match_dnn} developed a deep neural network (DNN) that directly predicts the component values of an L-type matching network from only the magnitude of the reflection coefficient, thus avoiding complex impedance measurement and iterative tuning procedures. Similarly, a low-complexity, hidden-layer-free shallow learning model was presented in \cite{slm}, which can determine the component values of matching circuits in real time solely using the magnitude of antenna reflection coefficients. In addition, Jeong et al. \cite{jeong2019real} introduced a real-time range-adaptive impedance matching method for wireless power transfer systems using neural network-based machine learning. Furthermore, Cheng et al. \cite{chengwendong} proposed a DNN that directly outputs the optimal $\pi$-network matching solution for time-frequency domain impedance matching, with frequency, voltage standing-wave ratio (VSWR), and peak voltages as inputs. To achieve real-time impedance matching for variable loads in RF systems, a deep learning-based state transfer adaptive matching network architecture was developed in \cite{wangkun2025sate}, which integrates non-invasive voltage and current probes with the DNN. To alleviate impedance mismatch under parasitic effects, Cheng et al. \cite{cheng2025data-driven} presented a data-driven adaptive impedance matching scheme. This scheme employs a residual-enhanced neural network to characterize unknown S-parameters influenced by parasitic effects, and adopts an inverse mapping network to rapidly and accurately predict the optimal parameters. Overall, these DNN-based impedance matching methods exhibit excellent performance in terms of both accuracy and efficiency.

However, the DNN-based methods require substantial labeled data and lack sufficient flexibility in the face of dynamic environmental changes. Reinforcement learning (RL), which learns online through real-time interaction with the environment, is naturally suitable for dynamic impedance tuning tasks without requiring large amounts of pre‑labeled data. 
To the best of our knowledge, few existing works have applied RL in a specific impedance tuning system. Despite its promising potential, two critical challenges remain for RL-based impedance tuning. First, due to the lack of existing applications, the theoretical foundation for applying RL to impedance tuning is insufficiently clarified, leaving the rationality of its application uncertain. Second, the design of core RL elements, especially the state and reward function, remains a major challenge, as they must fit the specific impedance tuning task and enable the agent to learn a convergent policy for fast and accurate impedance matching. To address these challenges, in this paper, we first model the impedance tuning problem as an optimal control problem and verify that the optimal control law can be solved via RL. We then propose a deep reinforcement learning (DRL)-based adaptive impedance matching method, with key DRL elements carefully designed for the impedance matching task. Finally, we compare the matching performance of the proposed method with conventional heuristic algorithms and a gradient-based method in terms of matching accuracy and efficiency. To summarize, our contributions are as follows.
\begin{itemize}
    \item \textit{Control-Theoretic Modeling for Impedance Tuning:} We formulate the impedance tuning problem as an optimal control problem using the state-space method, including the definition of system state, control input, and state evolution equation. Furthermore, we prove the feasibility of solving the optimal control law for this problem via RL, by analyzing the reward function and the action-value function. This formulation bridges the gap between adaptive impedance matching and optimal control theory, establishing a solid theoretical foundation for the proposed DRL solution.

    \item \textit{Tailored DRL-based Method for Adaptive Impedance Tuning:} We propose a tailored DRL-based method for impedance tuning, which trains a DRL agent to learn an approximate optimal control law via online interactions with the environment. Specifically, we design a state representation that integrates both matching quality metrics and frequency indicators to accurately characterize the system state. Furthermore, to effectively incentivize the agent to learn excellent policies, we propose a carefully designed piecewise reward function that consists of three components: a base reward, an improvement reward, and a fast convergence reward. The proposed DRL-based matching method achieves fast and accurate impedance matching without relying on massive pre-labeled data.

    \item \textit{Robustness Enhancement for Tuning Stability:} We introduce an effective test-phase exploration mechanism to enhance the tuning stability of the DRL-based impedance matching method. By maintaining a small, controlled exploration rate during test-phase inference, the agent can effectively escape from local optima and significantly reduce tuning variance at high frequencies. This leads to more consistent and reliable performance over the operating frequency band, alleviating the common issues of high-frequency fluctuations and local optima trapping, which is critical for practical deployment.
\end{itemize}

The rest of this paper is organized as follows. Section \ref{sec:model} introduces an adaptive impedance matching system. 
Section \ref{sec:RL_control} 
analyzes the system from a control-theoretic perspective and proves the feasibility of RL-based optimal control law solution. Section \ref{sec:DRL_algorithm} details the proposed DRL-based impedance matching method. Section \ref{sec:numerical_results} presents comprehensive experimental results and performance analyses. Finally, Section \ref{sec:conclusion} concludes the paper.

\section{System Model}
\label{sec:model}
In this section, we first introduce an adaptive impedance matching system, followed by an analysis of the closed-form solution for the ideal L-network, and finally model the impedance tuning process.

A typical adaptive impedance matching system, as shown in Fig. \ref{fig:system}, comprises three key modules: a tunable matching network (TMN), an impedance sensor, and a tuner control unit. The TMN, typically configured as an L-network or $\pi$-network with tunable capacitors and inductors, performs impedance transformation by adjusting its reactive components. A bi-directional coupler extracts incident and reflected signals, from which the impedance sensor detects impedance variations and derives the reflection coefficient or VSWR. Based on these measurements, the tuner control unit determines the appropriate adjustments to the TMN using an adaptive impedance matching method. In this work, we focus on the scenario where the available measurement data is accurate and complete complex impedance information (e.g., the reflection coefficient), rather than merely scalar values, to enable precise and efficient impedance matching. 
\begin{figure}
\centering
\includegraphics[scale=0.5]{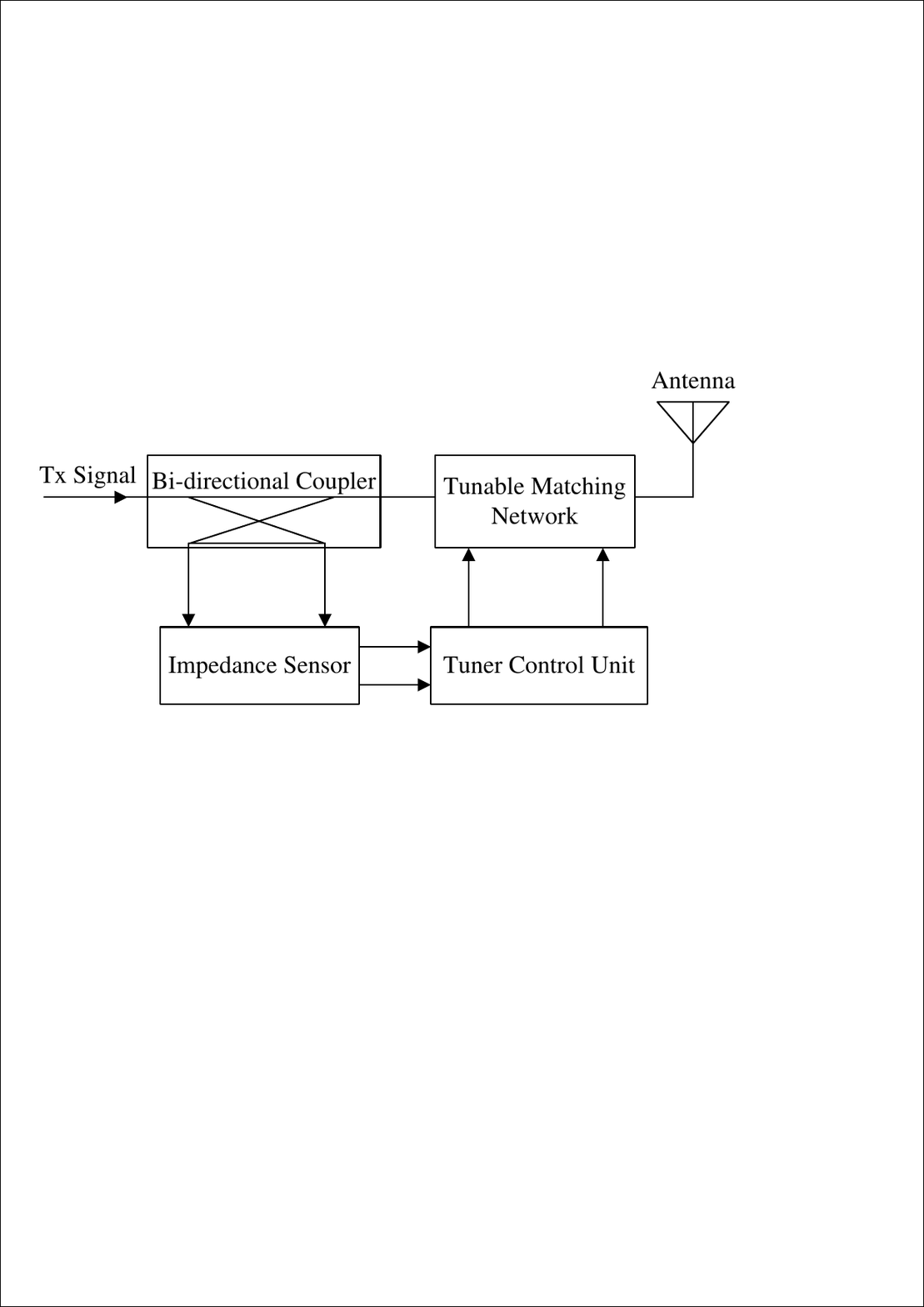}
\caption{Block diagram of an adaptive impedance matching system.}\label{fig:system} 
\end{figure}

Without loss of generality, we adopt the L-network as the TMN configuration for our analysis. As shown in Fig. \ref{fig:circuit}, the L-network uses two tunable capacitors $C_p$ and $C_s$ which enable the impedance transformation. The input impedance $Z_\text{in}$ after impedance transformation by the
L-network is given by
\begin{equation}
Z_\text{in}=\frac{1}{jB_p+\dfrac{1}{Z_L+\dfrac{1}{jB_s}}},
\label{eq:Zin}
\end{equation}
where $B_p=\omega C_p$ and $B_s=\omega C_s$ are the susceptances of $C_p$
and $C_s$, respectively, $Z_L$ denotes the load impedance, and $\omega$
represents the angular frequency.

Based on Eq. \eqref{eq:Zin}, the reflection coefficient which characterizes the input impedance can be expressed as
\begin{equation}
    \Gamma_\text{in}=\frac{Z_\text{in}-R_s}{Z_\text{in}+R_s},
    \label{eq:Gamma_in}
\end{equation}
where $R_s$ represents the source impedance.
For maximum power transfer, the input impedance $Z_\text{in}$ must equal the complex conjugate of the source impedance. As the standard source impedance in RF systems is typically a purely resistive 50 $\Omega$, this matching condition is reduced to $Z_\text{in}=R_s$, which consequently implies that the reflection coefficient $\Gamma_\text{in}$ is zero. The TMN fulfills this requirement by tuning the values of $C_p$ and $C_s$. Denoting the load impedance as $Z_L = R_L + jX_L$ and substituting it into Eq. \eqref{eq:Zin}, we get the
conjugate matching equations as
\begin{equation}
    \begin{cases}
\dfrac{B_s^2R_L}{\left(B_pB_sR_L\right)^2+\left(B_p+B_s-B_pB_sX_L\right)^2}=R_S \\[10pt] 
\dfrac{2B_pB_sX_L+B_s^2X_L-B_pB_s^2\left(R_L^2+X_L^2\right)-B_p-B_s}{\left(B_pB_sR_L\right)^2+\left(B_p+B_s-B_pB_sX_L\right)^2}=0.  
\end{cases}
\label{eq:conjugate matching}
\end{equation}

By solving Eq. \eqref{eq:conjugate matching}, we obtain the closed-form expressions of the capacitor values required for impedance matching as
\begin{equation}
    \begin{cases}
C_p=\dfrac{R_LR_S-R_L^2\pm X_L\sqrt{R_LR_S-R_L^2}}{\omega\left[R_LX_LR_S\pm R_LR_S\sqrt{R_LR_S-R_L^2}\right]} \\[10pt]
C_s=\dfrac{X_L\pm\sqrt{R_LR_S-R_L^2}}{\omega\left[R_L^2+X_L^2-R_LR_S\right]}. 
\end{cases}
\label{eq:CpCs}
\end{equation}

\begin{figure}
\centering
\includegraphics[scale=0.8]{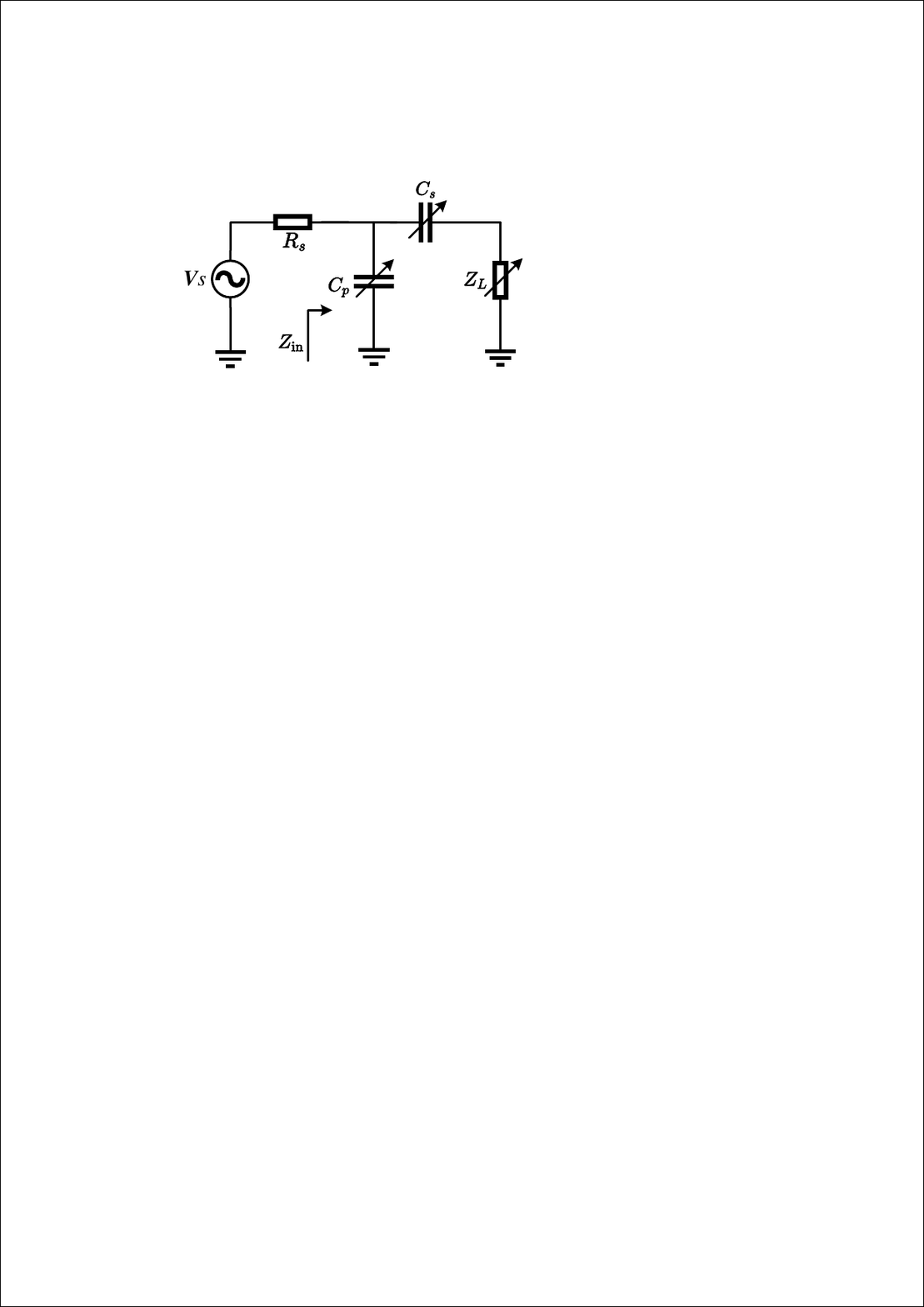}
\caption{The architecture of the L-network.}
\label{fig:circuit}
\end{figure}

Despite the availability of closed-form expressions in Eq. \eqref{eq:CpCs} for optimal capacitances, their direct application in practice is infeasible. This is due to the inherent model inaccuracies, component tolerances, and non-negligible parasitic effects in real RF circuits \cite{cheng2025data-driven}, which deviate from the ideal circuit.  Additionally, the load impedance (e.g., of an antenna) is often unknown or dynamic in practice \cite{firrao2008automatic}, as it exhibits frequency-dependent and environment-sensitive characteristics.

In practical scenarios, impedance matching is typically achieved through a dynamic impedance tuning process, where an algorithm iteratively explores and optimizes the matching parameters. 
To characterize this tuning process, we first establish the system model. Let \( \boldsymbol{\alpha} = [\alpha_1, \alpha_2, \cdots, \alpha_n]^T \) denote the tunable parameters of the TMN, \( Z_{\text{in}}(\boldsymbol{\alpha}) \) be the input impedance determined by \( \boldsymbol{\alpha} \), and \( \Gamma_{\text{in}}(\boldsymbol{\alpha}) = \frac{Z_{\text{in}}(\boldsymbol{\alpha}) - R_s}{Z_{\text{in}}(\boldsymbol{\alpha}) + R_s} \) be the reflection coefficient.
At tuning step \( k \), the initial parameter vector is \( \boldsymbol{\alpha}_k \), and the corresponding reflection coefficient magnitude is \( |\Gamma_{\text{in}}(\boldsymbol{\alpha}_k)| \). 
The algorithm updates \( \boldsymbol{\alpha}_k \) to \( \boldsymbol{\alpha}_{k+1} = \boldsymbol{\alpha}_k + \Delta \boldsymbol{\alpha} \), where \( \Delta \boldsymbol{\alpha} \) is the parameter adjustment, such that \( |\Gamma_{\text{in}}(\boldsymbol{\alpha}_{k+1})| < |\Gamma_{\text{in}}(\boldsymbol{\alpha}_k)| \), thereby reducing the degree of mismatch. 
The tuning process terminates when a stopping criterion is satisfied (e.g., $|\Gamma_\text{in}|$ falls below a threshold), and the total number of tuning steps $K$ is adaptive and determined by the algorithm.
Mathematically, the objective of tuning is to find the optimal sequence of parameter adjustments \( \{\Delta \boldsymbol{\alpha}\}_{k=1}^K \) that minimizes $|\Gamma_\text{in}|$ at each step, which is formulated as
\begin{equation}
\begin{aligned}
 \Delta\boldsymbol{\alpha}_k &= \arg\min_{\Delta\boldsymbol{\alpha}_k} \left| \Gamma_{\text{in}}(\boldsymbol{\alpha}_{k-1} + \Delta\boldsymbol{\alpha}_k) \right|, \quad k = 1, 2, \dots, K \\
& \textit{s.t.} \quad \left| \Gamma_{\text{in}}(\boldsymbol{\alpha}_K) \right| \leq \Gamma_{\text{th}}
\end{aligned}
\end{equation}
where $\boldsymbol{\alpha}_0$ is the initial parameter of the TMN, and $\Gamma_\text{th}$ denotes the predefined reflection coefficient magnitude threshold.

However, conventional tuning methods rely on exhaustive trial-and-error searches, leading to low efficiency and slow convergence. To overcome these limitations and develop a high-performance impedance tuning approach, we analyze the tuning system from a control-theoretic perspective in subsequent sections. Based on optimal control theory, we derive the optimal control law for impedance tuning, and then employ DRL to approximate this law in a data-driven manner.

\section{From Traditional Optimal Control to RL for Impedance Tuning}
\label{sec:RL_control}
In this section, we first introduce the fundamentals of optimal control. Building on this theoretical basis, we model the impedance tuning system as a control system. Subsequently, we derive the optimal control law for the impedance tuning system and establish its theoretical connection to RL.

\subsection{Basics of Optimal Control}
\label{subsec:Basics of Optimal Control}
To formulate the impedance tuning system within an optimal control framework, we first introduce the fundamental elements of state-space analysis, followed by a brief derivation of the optimal control law.

1) \textit{State Vector:}
The state vector serves as a fundamental descriptor characterizing the dynamic behavior of a control system, as it encapsulates all necessary information from the past to uniquely determine its future evolution. In its general form, it is represented as an $n$-dimensional column vector
\begin{equation}
\bm{x}=\left[x_1,x_2,\cdots,x_n\right]^T.
\end{equation}

2) \textit{Control Input:}
The control input represents the manipulable variable that governs the state transition and determines the dynamic evolution of the control system. Similarly, the control input is represented as an $m$-dimensional column vector
\begin{equation}
\bm{u}=\left[u_1,u_2,\cdots,u_m\right]^T.
\end{equation}

3) \textit{State Evolution Equation:}
The relationship between the state vector, control input, and system dynamics is described by the state evolution equation.
For a discrete-time system, which is sampled at discrete time instants $k$, the state transition is expressed as a difference equation
\begin{equation}
    \boldsymbol{x}_{k+1} = \boldsymbol{f}\left(\bm{x}_{k},\boldsymbol{u}_{k},k\right),
\end {equation}
where $\bm{x}_k$ and $\bm{u}_k$ represent the state and control input at the $k$-th time step, respectively.

4) \textit{Control Performance Metric:} 
While the continuous-time linear quadratic (LQ) performance metric is widely adopted in theoretical analysis \cite{anderson2007optimal}, we focus on its discrete-time counterpart here, as it aligns directly with the digital implementation of our impedance tuning system. The discrete-time LQ metric is defined as
\begin{equation}
    J = \sum_{k=0}^{\infty} \gamma^k\left( \boldsymbol{x}^T_k \boldsymbol{Q} \boldsymbol{x}_k + \boldsymbol{u}^T_k \boldsymbol{R} \boldsymbol{u}_k \right),
\end{equation}
where $\gamma \in (0,1)$ is the discount factor to ensure the convergence of the metric $J$, $\bm{Q} \succeq 0$ and $\bm{R} \succ 0$ are the weighting matrices for the state and control input, respectively. This metric $J$ is also referred to
as the cost function, which balances the control accuracy and control consumption.

5) \textit{Optimal Control Law and Bellman Equation:} 
The control objective is to find the optimal control law \(\boldsymbol{u}^* = \phi(\boldsymbol{x})\) that minimizes the cost \(J\). To this end, we first define the optimal value function $V^*(\boldsymbol{x}_k)$ as 
\begin{equation}
V^*(\boldsymbol{x}_k) = \min_{\boldsymbol{u}_k, \boldsymbol{u}_{k+1}, \ldots} \sum_{i=k}^{\infty} \gamma^{i-k} \left( \boldsymbol{x}_i^T \boldsymbol{Q} \boldsymbol{x}_i + \boldsymbol{u}_i^T \boldsymbol{R} \boldsymbol{u}_i \right).
\label{eq:Vx*}
\end{equation}
which represents the minimal cumulative cost-to-go starting from state $\bm{x}_k$ under the optimal control law.

Based on the dynamic programming principle \cite{bertsekas2012dynamic}, we split the infinite-horizon cost sum into the immediate cost and the discounted future cost, yielding the Bellman optimality equation
\begin{equation}
    V^*(\boldsymbol{x}_k) = \min_{\boldsymbol{u}_k} \left\{ \left( \boldsymbol{x}_k^T\boldsymbol{Q}\boldsymbol{x}_k + \boldsymbol{u}_k^T\boldsymbol{R}\boldsymbol{u}_k \right) + \gamma V^*(\boldsymbol{x}_{k+1}) \right\},
    \label{eq:dis_bellman}
\end{equation}
where $\boldsymbol{x}_{k+1} = \boldsymbol{f}\left(\bm{x}_{k},\boldsymbol{u}_{k}\right)$ is the discrete-time state evolution equation. This equation directly provides the optimal control input $\boldsymbol{u}^*_k$ for the current state $\boldsymbol{x}_k$, which implicitly represents the optimal control law $\boldsymbol{u}_k^* = \phi(\boldsymbol{x}_k), (k=0,1,2,\cdots)$. For simplicity, we define a discrete Q-function $Q^*_H(\cdot)$ as 
\begin{equation}
\label{eq:QH_func}Q^*_H(\boldsymbol{x}_k,\boldsymbol{u}_k) =  \boldsymbol{x}_k^T\boldsymbol{Q}\boldsymbol{x}_k + \boldsymbol{u}_k^T\boldsymbol{R}\boldsymbol{u}_k + \gamma V^*(\boldsymbol{x}_{k+1}).
\end{equation}
Thus, the optimal control law is finally expressed as
\begin{equation}
     \label{eq: optimal control law min}
    \boldsymbol{u}^*_k=\phi (\boldsymbol{x}_k)= \arg \min_{\boldsymbol{u}_k} Q^*_H(\boldsymbol{x}_k, \boldsymbol{u}_k).
\end{equation}

\subsection{Control-Theoretic Modeling of the Impedance Tuning System}
In this part, we model the impedance tuning system as an optimal control system within the state-space framework. Fig. \ref{fig:tuning_control_system} illustrates the block diagram of the impedance tuning system, where the physical quantities (e.g., reflection coefficient \(\Gamma_{\text{in}}\) and capacitance adjustment) are clearly depicted. 
\begin{figure}
\centering
\includegraphics[scale=0.45]{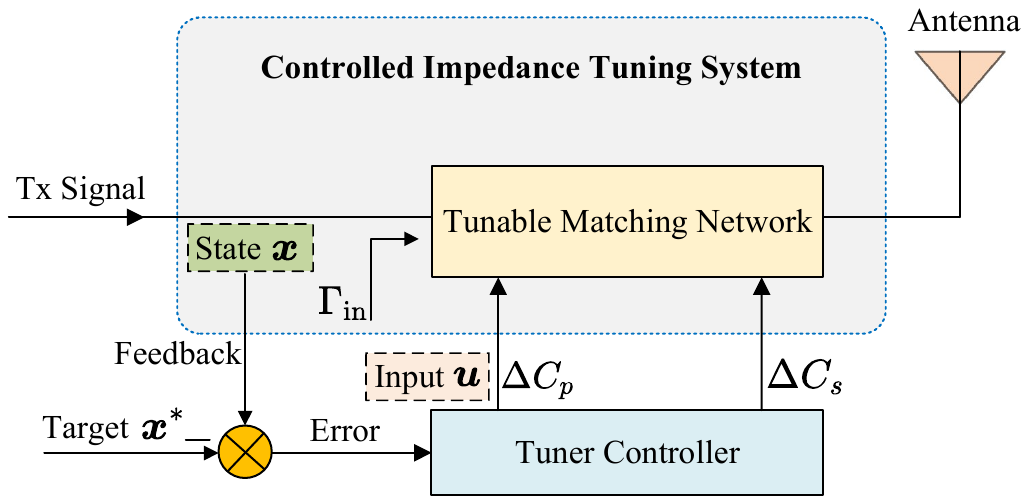}
\caption{Block diagram of the impedance tuning system under the control-theoretic framework.}\label{fig:tuning_control_system} 
\end{figure}
Specifically, we map these physical quantities to the corresponding control-theoretic variables as follows:

1) \textit{State Vector:}
The reflection coefficient $\Gamma_\text{in}$ in Eq. \eqref{eq:Gamma_in} is selected as the core state variable, which can be directly measured by the bi-directional coupler and impedance sensor. By decomposing $\Gamma_\text{in}$ into its real and imaginary parts, a two-dimensional state vector is constructed as
\begin{equation}
    \boldsymbol{x} = \begin{bmatrix} x_1 \\x_2 \end{bmatrix} = \begin{bmatrix} \text{Re}(\Gamma_\text{in}) \\ \text{Im}(\Gamma_\text{in})\end{bmatrix},
\end{equation}
where $x_1=\mathrm{Re}(\Gamma_\text{in})$, $x_2=\mathrm{Im}(\Gamma_\text{in})$.

The core goal of impedance tuning is to drive the state vector $\boldsymbol{x}$ to asymptotically converge to the target equilibrium state $\boldsymbol{x}^*=[0,0]^T$. When this convergence is achieved, the reflection coefficient satisfies \(|\Gamma_\text{in}| = \sqrt{x_1^2 + x_2^2} = 0\), which means zero power reflection between the source and the matching network, thereby realizing perfect impedance matching.

2) \textit{Control Input:}
The incremental adjustments of capacitance are adopted as control inputs, which better align with the practical tuning scenario: the parameters of TMN are adjusted incrementally based on their current values. The control input vector is defined as
\begin{equation}
    \boldsymbol{u} = \begin{bmatrix} u_1 \\ u_2 \end{bmatrix} = \begin{bmatrix} \Delta{C_p} \\ \Delta{C_s} \end{bmatrix},
\end{equation}
where \(u_1=\Delta{C_p} \) and \(u_2=\Delta{C_s}\) denote the incremental adjustments of the capacitors $C_p$ and $C_s$, respectively. 

Given that the two capacitances are constrained to stay within their physically allowable ranges and their initial values \(C_p(0)\) and \(C_s(0)\), the current capacitance values can be expressed as
\begin{equation}
\begin{cases}
C_p(k) = C_p(0) + \sum_{i=0}^{k-1} \Delta C_p(i) \\
C_s(k) = C_s(0) + \sum_{i=0}^{k-1} \Delta C_s(i)
\end{cases},
\end{equation}
where \(C_p(k)\) and \(C_s(k)\) are the capacitance values at the \(k\)-th time step. The feasible control input set \(\mathcal{U}\) is thus defined as
\begin{equation}
\boldsymbol{u} \in \mathcal{U} = \left\{ (\Delta{C_p},\Delta{C_s}) \middle| \quad
\begin{aligned}
C_{p,\min} &\leq C_p(k) \leq C_{p,\max}, \\
C_{s,\min} &\leq C_s(k) \leq C_{s,\max}
\end{aligned}
\right\},
\end{equation}
where $C_{p,\min}$ and $C_{p,\max}$ denote the lower and upper limits of $C_p$, while $C_{s,\min}$ and $C_{s,\max}$ denote those of $C_s$.

3) \textit{State Evolution Equation:}
The dynamic state evolution of the impedance tuning control system is determined by the control input $\boldsymbol{u}_k$, and its state equation can be expressed as
\begin{equation}
    \label{eq:state equation}
    \boldsymbol{x}_{k+1} = \boldsymbol{f}\left(\bm{x}_k,\boldsymbol{u}_k\right) = \begin{bmatrix}
        \mathrm{Re} (\Gamma_{\text{in},k+1})\\
        \mathrm{Im}(\Gamma_{\text{in},k+1})
    \end{bmatrix},
\end{equation}
where $\Gamma_{\text{in},k+1}$ denotes the $\Gamma_\text{in}$ at the $(k+1)$-th time step. The function $\boldsymbol{f}(\cdot)$ represents a nonlinear vector-valued mapping whose closed-form expression is not analytically tractable. This intractability arises from three main factors:
\begin{itemize}
    \item The nonlinear mapping relationship between the input impedance $Z_\text{in}$ and capacitances $C_p$, $C_s$ induces a nonlinear correlation between $\Gamma_\text{in}$ and the control input $\boldsymbol{u}$.
    \item Deriving the expression requires extracting real and imaginary parts of complex-valued $\Gamma_\text{in}$, making the resultant formula difficult to simplify and excessively cumbersome.
    \item Parasitic effects inherent in practical systems further complicate the system model, rendering an exact closed-form representation of $\bm{f}(\cdot)$ infeasible.
\end{itemize}
\subsection{Optimal Control Law for the Impedance Tuning System}
\label{subsec:Optimal_Control_Law_Tuning}
Building on the control-theoretic model established in the preceding subsections, the optimal control law for the impedance tuning system can theoretically be obtained by solving the following set of equations
\begin{equation}
\label{eqs:the_optimal_law_expr}
    \begin{cases}
        \bm{u}_k=\arg \min_{\bm{u}_k} \{\boldsymbol{x}_k^T\boldsymbol{Q}\boldsymbol{x}_k + \boldsymbol{u}_k^T\boldsymbol{R}\boldsymbol{u}_k + \gamma V^*(\boldsymbol{x}_{k+1})  \}
        \\
        \bm{x}_{k+1}=\bm{f}(\bm{x}_k,\bm{u}_k)
    \end{cases}.
\end{equation}

While analytical solutions to Eq. \eqref{eqs:the_optimal_law_expr} exist only for
simple linear systems such as linear quadratic regulator (LQR) systems \cite{Terra2014Optimal}, solving the equations for our nonlinear
impedance tuning system is analytically intractable. This difficulty stems from the highly complex state evolution equation without an explicit form, which makes conventional analytical methods infeasible. To handle such nonlinearity and avoid intractable analytical derivations, we introduce a RL-based solution framework to obtain the optimal control law.
First, we define the key elements of the RL framework, and then derive the connection between the optimal control law and RL. 
We define the reward function in RL as
\begin{equation}
r(\boldsymbol{x}, \boldsymbol{u}) = - \left( \boldsymbol{x}^T Q \boldsymbol{x} + \boldsymbol{u}^T R \boldsymbol{u} \right).
\label{eq:reward_definition}
\end{equation}
The optimal action-value function (optimal Q-function) in RL is defined as
\begin{equation}
Q^*_\text{RL}(\boldsymbol{x}_k, \boldsymbol{u}_k) = \min_{\bm{u}_{k+1},\bm{u}_{k+2},\ldots} \sum_{i=k}^{\infty} \gamma^{i-k} r(\boldsymbol{x}_i, \boldsymbol{u}_i),
\label{eq:rl_q}
\end{equation}
which represents the minimal cumulative reward starting from state $\bm{x}_k$ with given action $\bm{u}_k$. Based on these definitions, the equivalence between optimal Q-function in RL and discrete Q-function in \eqref{eq:QH_func} can be summarized in the following proposition.
\begin{proposition}[Q-Function Equivalence]
    \label{pro:Q-Function Equivalence}
The optimal Q-function in RL and the discrete Q-function in \eqref{eq:QH_func} satisfy
\begin{equation}
    \label{eq:Q-function equivalence}
    Q^*_\text{RL}(\boldsymbol{x}_k,\boldsymbol{u}_k) = -Q^*_{H}(\boldsymbol{x}_k,\boldsymbol{u}_k).
\end{equation}
\end{proposition}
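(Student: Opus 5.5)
The plan is a direct unrolling of the definition \eqref{eq:rl_q}: separate the first term of the discounted sum from the tail, and recognize the optimized tail as the optimal value function \eqref{eq:Vx*} evaluated at the successor state $\boldsymbol{x}_{k+1}$. Before doing so I fix a sign convention. Since $r=-(\boldsymbol{x}^T\boldsymbol{Q}\boldsymbol{x}+\boldsymbol{u}^T\boldsymbol{R}\boldsymbol{u})\le 0$, the word ``optimal'' in \eqref{eq:rl_q} has to mean the best achievable cumulative reward. I therefore read the optimization over $\boldsymbol{u}_{k+1},\boldsymbol{u}_{k+2},\ldots$ as a maximization of reward, which is the same as minimizing the accumulated cost $\sum\gamma^{i-k}(\boldsymbol{x}_i^T\boldsymbol{Q}\boldsymbol{x}_i+\boldsymbol{u}_i^T\boldsymbol{R}\boldsymbol{u}_i)$. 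With a literal minimum of the reward sum, the identity would instead involve the worst-case cost, so I will state this convention explicitly. The key elementary fact is $\sup(-a)=-\inf a$.

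The steps, in order, are as follows.
\begin{enumerate}
\item Write
\begin{equation*}
Q^*_\text{RL}(\boldsymbol{x}_k,\boldsymbol{u}_k)=r(\boldsymbol{x}_k,\boldsymbol{u}_k)+\max_{\boldsymbol{u}_{k+1},\ldots}\sum_{i=k+1}^{\infty}\gamma^{i-k}r(\boldsymbol{x}_i,\boldsymbol{u}_i).
\end{equation*}
The first term can be pulled out because it depends only on the fixed pair $(\boldsymbol{x}_k,\boldsymbol{u}_k)$ and not on the optimization variables.
\item Use the deterministic dynamics \eqref{eq:state equation}, under which $\boldsymbol{x}_{k+1}=\boldsymbol{f}(\boldsymbol{x}_k,\boldsymbol{u}_k)$ is fully determined by $(\boldsymbol{x}_k,\boldsymbol{u}_k)$. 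The tail problem is then an optimal control problem started from $\boldsymbol{x}_{k+1}$.
\item Factor out $\gamma$, reindex $i\mapsto i-(k+1)$, and apply $\max(-\cdot)=-\min(\cdot)$. Comparing with \eqref{eq:Vx*}, the tail equals $-\gamma V^*(\boldsymbol{x}_{k+1})$.
\item Combine the two pieces with \eqref{eq:reward_definition} to get $-\big(\boldsymbol{x}_k^T\boldsymbol{Q}\boldsymbol{x}_k+\boldsymbol{u}_k^T\boldsymbol{R}\boldsymbol{u}_k+\gamma V^*(\boldsymbol{x}_{k+1})\big)=-Q^*_H(\boldsymbol{x}_k,\boldsymbol{u}_k)$, which is \eqref{eq:QH_func}.
\end{enumerate}
As a by-product, taking $\arg\max_{\boldsymbol{u}_k}$ of $Q^*_\text{RL}$ gives $\arg\min_{\boldsymbol{u}_k}$ of $Q^*_H$, so the RL greedy policy coincides with the optimal control law \eqref{eq: optimal control law min}.

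The main obstacle is making the identification in step 3 rigorous, which requires three things.
\begin{itemize}
\item \emph{Finiteness of the value.} $V^*$ must be finite, and the sums must converge so that splitting them is legitimate. Here $|\Gamma_\text{in}|\le 1$ for passive loads, so $\boldsymbol{x}$ is bounded. Each $\boldsymbol{u}_k$ is bounded because the capacitances stay in $[C_{\min},C_{\max}]$. Hence the stage cost is uniformly bounded, and $\gamma\in(0,1)$ gives absolute convergence.
\item \emph{Matching admissible sets.} The feasible set $\mathcal{U}$ depends on the current capacitances $C_p(k)$ and $C_s(k)$, not only on $\boldsymbol{x}_k$. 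I would handle this by interpreting $V^*$ and $Q^*_\text{RL}$ as implicitly carrying the current capacitance values, with the same admissible sequences in both problems. Then the tail feasible set starting at step $k+1$ is exactly the set over which $V^*(\boldsymbol{x}_{k+1})$ is defined.
\item \emph{Existence of optima.} If optima are not attained, I replace max and min by sup and inf throughout. The argument is unchanged, since it only uses the order-reversing identity under negation and the additivity of the fixed first term.
\end{itemize}
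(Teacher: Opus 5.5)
Your proof is correct and follows the same route as the paper: you peel off the stage-$k$ reward, factor out $\gamma$, and identify the optimized tail with $-\gamma V^*(\boldsymbol{x}_{k+1})$ using the definition of $V^*$. Your explicit convention that the optimization in \eqref{eq:rl_q} maximizes reward is exactly what the paper's second line uses without saying so, since taken literally $\min\sum r=-\max\sum(\text{cost})$ does not yield $V^*$; your remarks on boundedness, capacitance-dependent admissible sets, and $\sup/\inf$ add rigor the paper omits.
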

\begin{IEEEproof}
We expand $Q^*_\text{RL}(\bm{x}_k,\bm{u}_k)$ according to its definition as follows
\begin{align}
Q^*_\text{RL}(\bm{x}_k, \bm{u}_k) 
&= r(\bm{x}_k, \bm{u}_k) + \gamma \min \sum_{i=k+1}^{\infty} \gamma^{i-(k+1)} r(\bm{x}_i, \bm{u}_i) \notag \\
&= -\left( \bm{x}_k^T Q \bm{x}_k + \bm{u}_k^T R \bm{u}_k \right) \notag \\
&\quad  - \gamma \min \sum_{i=k+1}^{\infty} \gamma^{i-(k+1)} \left( \bm{x}_i^T Q \bm{x}_i + \bm{u}_i^T R \bm{u}_i \right) \notag \\
&= -\left[ \bm{x}_k^T Q \bm{x}_k + \bm{u}_k^T R \bm{u}_k  + \gamma V^*(\bm{x}_{k+1}) \right] \notag \tag{by Eq. \eqref{eq:Vx*}} \\
&= -Q^*_H(\bm{x}_k, \bm{u}_k) .
\end{align}
This completes the proof of Proposition \ref{pro:Q-Function Equivalence}.
\end{IEEEproof}
Therefore, the optimal control law can be directly obtained through the Q-function in RL as follows
\begin{equation}
    \bm{u}^*=\phi (\bm{x}) = \arg\max_{\bm{u}\in\mathcal{U}} Q^*_\text{RL}(\bm{x},\bm{u}).
\end{equation}

In summary, our derivation confirms that the optimal control law for the impedance tuning system can be directly obtained by maximizing the optimal Q-function  $Q^*_\text{RL}(\bm{x},\bm{u})$ with respect to $\bm{u}$ in RL, providing a theoretical basis for subsequent algorithm design.

\section{DRL-Based Impedance Tuning Algorithm}
\label{sec:DRL_algorithm}
In this section, we first introduce the use of DRL to approximate the optimal control law for impedance tuning, and then propose a DRL-based impedance tuning algorithm.

\subsection{Basics of Deep Reinforcement Learning}
To facilitate the presentation of our design, we briefly introduce some key concepts of DRL in this subsection.

Markov decision process (MDP) is the foundational mathematical framework for RL model. An MDP is formally defined by the tuple  $\langle \mathcal{S},\mathcal{A},P,R\rangle$, where $\mathcal{S}$ denotes the state space, $\mathcal{A}$ represents the action space, $P$ defines the state transition probability, $R$ is the immediate reward. When an agent in state $s \in \mathcal{S}$ executes an action $a \in \mathcal{A}$, the environment transitions to a next state $s^{\prime} \in \mathcal{S}$ with a probability given by $P(s^\prime | s,a)=Pr(S_{t+1}=s^\prime | S_{t}=s, A_t=a)$. Concurrently, the agent receives an immediate reward 
$R(s,a)$. 
The agent’s action is governed by a policy 
\begin{equation}
    \pi (a \mid s)=Pr(A_t=a \mid S_t =s).
\end{equation}
which maps states to a probability distribution over actions. The expected cumulative reward is defined as the return, whose expression is given by
\begin{equation}
    G_t = R_t + \gamma R_{t+1} + \gamma^2 R_{t+2} + \cdots = \sum_{k=0}^\infty \gamma^k R_{t+k},
\end{equation}
where $\gamma \in [0,1)$ is the discount factor for future rewards. Due to the inherent stochasticity in both environment transitions and the policy itself, the return $G_t$ is a random variable. Consequently, the core optimization problem is formulated as 
\begin{equation}
    \max_\pi  \mathbb{E}(G_t).
\end{equation}
The objective of RL is seeking the policy $\pi(a | s)$ that yields the highest expected return.

The definition of the action-value function in RL is given as follows
\begin{equation}
    Q^\pi(s, a) = \mathbb{E}_\pi\left[G_t \mid S_t=s, A_t=a\right],
\end{equation}
which is the conditional expected return for an agent to select action $a$ in the state $s$ under the policy $\pi$. For any policy $\pi$ and any state $s \in \mathcal{S}$, action-value function satisfies the following recursive relationship
\begin{align}
& Q^{\pi}(s, a) \notag \\
 &= \mathbb{E}_{s^{\prime}} \left[ R(s,a) + \gamma \sum_{a' \in \mathcal{A}} \pi(a'|s') Q^{\pi}(s', a') \middle| s, a \right]  \notag \\
 &= \sum_{s' \in \mathcal{S}} P(s'|s, a) \left(R(s,a) + \gamma \sum_{a' \in \mathcal{A}} \pi(a'|s') Q^{\pi}(s', a') \right) ,
\label{eq:q_bellman}
\end{align}
where  $R(s,a)$ is the immediate reward when the environment transits from state $s$ to state $s^\prime$ after taking the action $a$, and Eq. \eqref{eq:q_bellman} is the well-known Bellman equation of action-value function \cite{sutton1998reinforcement}.

A policy is deemed superior to another if its expected return outperforms that of the alternative across all possible states and actions. On this basis, the optimal action-value function can be expressed as
\begin{equation}
\label{eq:optimal_Q_expr}
    Q^*(s,a)=\max_\pi Q^\pi (s,a), \quad \forall s \in \mathcal{S},\ a \in \mathcal{A}
\end{equation}
Given the optimal action-value function, the corresponding optimal policy is uniquely determined as
\begin{equation}
\label{eq:policy_expr}
    \pi^*(a|s) =
\begin{cases}
1, & \text{if } a = \arg \max_{a \in \mathcal{A}} Q^*(s, a) \\
0, & \text{otherwise}
\end{cases}
\end{equation}
By integrating Eqs. \eqref{eq:optimal_Q_expr}, \eqref{eq:policy_expr} with \eqref{eq:q_bellman}, the Bellman optimality equation for $Q^*(s,a)$ is given by
\begin{align}
Q^*(s, a)
&= \sum_{s' \in \mathcal{S}} P(s'|s, a) \left( R(s,a) + \gamma \max_{a' \in \mathcal{A}} Q(s', a') \right) \notag \\
&= \mathbb{E}_{s'} [ R(s,a) + \gamma \max_{a' \in \mathcal{A}} Q^*(s', a') \mid s_t = s,, a_t = a ].
\end{align}

Based on the Bellman optimality equation, we can derive the optimal policy $\pi^*(a | s)$ and the corresponding $Q^*(s,a)$ using iterative techniques, such as policy iteration and value iteration algorithms \cite{sutton1998reinforcement}. In
the following discussion, our focus will be placed on value
iteration-based approaches. 

When both the state $s$ and action $a$ are discrete, the $Q^*(s,a)$ can be represented as a lookup table (commonly referred to as a Q-table \cite{Arulkumaran2017Deep}) which is computed via iterative update rules.
However, as the dimensionality of the state or action space grows, or when the state or action space becomes continuous, maintaining a Q-table becomes computationally infeasible. To address this limitation, DNN can be employed to approximate the Q-table, such that $Q(s,a) \approx \widetilde{Q}(s,a;\bm{\theta})$, where $\bm{\theta}$ denotes the learnable weights of the DNN. This DNN-based approximation of the action-value function is known as a Deep Q-Network (DQN), which extends RL to handle high-dimensional, continuous state and discrete action spaces \cite{ mnih2015human}.

The trajectory segment $\langle S_t=s,A_t=a,R_{t+1}=R(s,a),S_{t+1}=s^{\prime}\rangle$ forms an “experience sample" for DQN training, which underlies the temporal-difference (TD) learning paradigm \cite{sutton1998reinforcement}. Based on the TD target principle, the DQN training loss function is defined as \cite{mnih2015human}
\begin{equation}
    \mathcal{L}(\boldsymbol{\theta}) = \mathbb{E} \left[ \left( T_{\mathrm{DQN}} - \widetilde{Q}(s, a; \boldsymbol{\theta}) \right)^2 \right],
\end{equation}
where $T_{\mathrm{DQN}}$ is the target Q-value, defined as $T_{\mathrm{DQN}}=R(s,a)+\gamma \max_{a^{\prime}} \widetilde{Q}(s^{\prime},a^{\prime};\boldsymbol{\theta})$. 

\subsection{Approximating the Optimal Control Law for Impedance Tuning  via Double Deep Q-Network}
In this work, we adopt Double Deep Q-Network (DDQN) \cite{van2016deep} rather than standard DQN, as the latter suffers from Q-value overestimation bias that degrades the stability and performance of the RL agent. The core idea of DDQN is to decouple the selection of the optimal action from the estimation of its value by using two separate neural networks: an online network $\widetilde Q(s,a;\boldsymbol{\theta})$ for action selection, and a target network $\hat{Q}(s, a; \boldsymbol{\theta}^-)$ for value estimation \cite{van2016deep}. The target Q-value in DDQN is redefined as
\begin{equation}
    \label{eq:T_DDQN}
    T_{\mathrm{DDQN}} = R(s,a) + \gamma \hat{Q} \left( s', \arg\max_{a'} \widetilde{Q}(s', a'; \boldsymbol{\theta}); \boldsymbol{\theta}^- \right).
\end{equation}

From Section \ref{subsec:Optimal_Control_Law_Tuning}, the optimal control law for the impedance tuning system (i.e., the optimal capacitance adjustment at each step) is the action that maximizes the optimal action-value function of RL. 
This equivalence establishes a direct theoretical foundation for solving the impedance tuning control law using RL. The core of the DDQN algorithm lies in the approximate learning of the optimal action-value function $Q^*(s,a)$. After training, the well-trained DDQN agent can directly output the optimal control action $a^*$ at each step, thereby realizing the optimal control law for impedance tuning as follows
\begin{equation}
    a^*=\arg \max _{a \in \mathcal{A}} \widetilde{Q}(s, a; \bm{\theta}).
\end{equation}

It is worth noting that the DDQN algorithm is inherently designed for continuous state spaces and discrete action spaces. Consequently, the continuous optimal control input (i.e., the optimal action) must be discretized into a finite set of candidate actions before being fed into the DDQN agent. This discretization step introduces an inherent action quantization error into the learned action-value function $\widetilde{Q}(s, a; \bm{\theta})$, which is a fundamental characteristic of the discrete-action RL framework adopted in this work.

This DRL-based implementation paradigm decouples the computationally intensive training phase from the lightweight online inference phase. During online tuning, the agent only performs a single forward pass to select the optimal action, eliminating the need for iterative optimization from scratch, which is crucial for impedance matching applications.

\subsection{Implementation of DRL-Based Impedance Tuning Method}
\label{subsec:algorithm_RL_intro}
To leverage DRL for impedance tuning, we elaborate on the core design of the DRL framework, including the agent, environment, state, action, and reward function, as follows.

1) \textit{Agent:} The agent is the adaptive antenna tuning module, which incorporates a DNN-based RL policy. It autonomously interacts with the operating environment to dynamically adjust the matching network.

2) \textit{Environment:} The environment refers to the dynamic system with which the agent interacts, encompassing the TMN, the source, and the variable load.

3) \textit{State:} Since the magnitude and phase of $\Gamma_\text{in}$ can be measured via a bi-directional coupler and impedance sensor, we adopt them as state variables instead of the real and imaginary parts used in the theoretical analysis. To satisfy the Markov property, the current parameter values of the TMN are also incorporated into the state space. Additionally, the frequency is included as a state variable to support multi-frequency, multi-load impedance tuning scenarios. Thus, we define the state as \begin{equation*}
    s = [|\Gamma_\text{in}|, \sin\phi, \cos \phi, C_p, C_s,f]^T,
\end{equation*}
where $\phi$ denotes the phase of $\Gamma_\text{in}$. Sine and cosine values of phase $\phi$ are used in place of $\phi$ itself to eliminate state discontinuity induced by phase periodicity.    

4) \textit{Action:} Actions correspond to the adjustment increments of capacitors. Given that the action space of the DDQN architecture is discrete, the capacitance adjustment is implemented in a unit-step manner, and thus the action is defined as
\begin{equation}
    a =  [\Delta C_p, \Delta C_s]^T,
\end{equation}
where $a \neq 0$, $\Delta C_p,\Delta C_s \in \{-\Delta C, 0, \Delta C\}$, $\Delta C$ denotes the single tuning step size of the tunable capacitor. Removing the action value $a = 0$ prevents stagnation caused by null action input during the tuning process, while the remaining valid actions can satisfy the full-direction adjustment requirements of dual-capacitor tuning.

5) \textit{Reward:} To balance tuning accuracy and efficiency, a piecewise reward function is designed, which is directly constructed based on the $|\Gamma_\text{in}|$. The immediate reward is defined as $R=r_\text{base}+r_\text{imp}+r_\text{fast}$, where $r_\text{base}$, $r_\text{imp}$ and $r_\text{fast}$ denote the base reward, the improvement reward, and the fast convergence reward, respectively. Designed with piecewise $|\Gamma_\text{in}|$ thresholds, this base reward $r_\text{base}$ provides differentiated incentives that strengthen near ideal matching, with its expression given by
\begin{equation}
r_{\text{base}} = 
\begin{cases} 
100, & |\Gamma_\text{in}| < 0.01 \\
80 + 800 \cdot (0.02 - |\Gamma_\text{in}|), & 0.01 \leq |\Gamma_\text{in}| < 0.02 \\
40 + 600 \cdot (0.06 - |\Gamma_\text{in}|), & 0.02 \leq |\Gamma_\text{in}| < 0.06 \\
-10 -5 \cdot \log_{10}|\Gamma_\text{in}|, & |\Gamma_\text{in}| \geq 0.06
\end{cases}
\end{equation}

Let $\Delta |\Gamma_\text{in}|= |\Gamma_{\text{in},t-1}|-|\Gamma_{\text{in},t}|$, where $|\Gamma_{\text{in},t-1}|$ and $|\Gamma_{\text{in},t}|$ denote the reflection coefficient magnitude before and after the tuning action at time step $t$, respectively. Based on $\Delta |\Gamma_\text{in}|$, this improvement reward term $r_\text{imp}$ is expressed as
\begin{equation}
r_{\text{imp}} = \begin{cases} 
\min\left\{ 30, 300 \cdot \Delta |\Gamma_\text{in}| \right\}, & \Delta |\Gamma_\text{in}| > 0 \\
200 \cdot \Delta |\Gamma_\text{in}|, & \Delta |\Gamma_\text{in}| < -0.02 \\
-0.5, & -0.02 \leq \Delta |\Gamma_\text{in}| \leq 0 
\end{cases}
\end{equation}

The fast reward term $r_\text{fast}$ incentivizes the agent to improve tuning efficiency via step constraints, with its expression
\begin{equation}r_{\text{fast}} = \begin{cases} 
0.1 \cdot (200 - k_{\text{step}}),  & |\Gamma_\text{in}| < 0.01 \&  k_{\text{step}} < 200 \\
0, & \text{else }
\end{cases}
\end{equation}
where $k_{\text{step}}$ is the number of tuning steps required by the agent.

Building upon the detailed design of the above key elements and the fundamentals of DRL, we employ Algorithm \ref{algorithm_RL} to maximize the expected return. The core technical details underpinning Algorithm \ref{algorithm_RL} are elaborated below:
\begin{algorithm}
    \caption{DRL-based Impedance Tuning Method}\label{algorithm_RL}
        \KwIn {Load-frequency training/test pools $\mathcal{P}_{\text{train}}/\mathcal{P}_{\text{test}}$, termination threshold $\varepsilon$, capacitance tuning range $C = [C_{p,\text{min}}, C_{p,\text{max}}] \times [C_{s,\text{min}}, C_{s,\text{max}}]$.}
        \KwOut {Optimal matching solution $\mathcal{C}^* = (C_p, C_s)^*$.}

        \textbf{Initialization}: Initialize environment $\mathcal{E}$; Initialize FIFO replay memory $\mathcal{M}$ with capacity $N_m$;
          Initialize DQN weights $\bm{\theta}$; Set target DQN weights $\bm{\theta}^- = \bm{\theta}$; Set maximum episode $N_{\text{ep}}$,  maximum time step per episode $T_{\text{max}}$, target Q-net update frequency $T_{\text{net}}$, experience batch size $N_e$, $\epsilon$-greedy exploration factor.

        \For{episode $e = 1$ \KwTo $N_{\text{ep}}$}{
             Reset environment $\mathcal{E}$:
             \\ Randomly sample a pair $\{Z_L, f\}$ from $\mathcal{P}_{\text{train}}$; 
             \\ Reset capacitances to fixed initial values $C_p^{(0)}, C_s^{(0)}$;
             \\ Obtain initial state $s_1$; 

            \For{time step $t = 1$ \KwTo $T_{\text{max}}$}{
                Input state $s_t$ to DQN, obtain state-action values $\widetilde{Q}(s_t, a; \bm{\theta}),\ a \in \mathcal{A}$.
                \\ Select action $a_t$ via $\epsilon$-greedy policy based on $\widetilde{Q}(s_t, a; \bm{\theta})$.
                \\ Execute action $a_t$, receive reward $r_{t+1}$, compute next state $s_{t+1}$.
                \\ Store experience tuple $\langle s_t,a_t,r_{t+1},s_{t+1} \rangle$ into replay memory $\mathcal{M}$.\\
                \If{$|\mathcal{M}| \geq N_e$}{
                    Randomly sample a mini-batch of $N_e$ tuples $\langle s_i, a_i, r_{i+1}, s_{i+1} \rangle$ from $\mathcal{M}$.
                    \\ Compute target Q-values $T_{\text{DQN},i}$ for the mini-batch via Eq. \eqref{eq:T_DDQN}.
                    \\ Update DQN weights $\bm{\theta}$ with input $\{s_i\}$ and target $\{T_{\text{DQN},i}\}$.\\
                    \If{$t \mod T_{\text{net}} = 0$}{
                         Update target DQN: $\bm{\theta}^- = \bm{\theta}$.}
                }
                \If{$|\Gamma^{(t)}_{\text{in}}| \leq \varepsilon$}{\textbf{break}.}
                Update time step: $t = t + 1$.
            }
        }
\KwRet Optimal tuning solution $\mathcal{C}^* = (C_p, C_s)^*$.
\end{algorithm}

In contrast to the DQN framework, which uses the same network $\widetilde Q(s,a;\boldsymbol{\theta})$ parameterized by weights $\boldsymbol{\theta}$ to both estimate and target action values, our approach employs a dedicated target network $\widetilde Q(s,a;\boldsymbol{\theta}^-)$ parameterized by weights $\boldsymbol{\theta}^-$ to compute target values. The target network weights are synchronized with the training network weights $\boldsymbol{\theta}$ at intervals of $T_\text{Net}$ time steps. The detailed network architecture is illustrated in Fig. \ref{fig:Q-net}. Specifically, the Q-network is a fully connected architecture, equipped with Dropout regularization to enhance generalization across multi-frequency and multi-load impedance tuning scenarios. The ReLU activation function is utilized in hidden layers to introduce non-linearity, while the output layer remains linear to preserve the range of action value estimates.

\begin{figure}
\centering
\includegraphics[width=3in]{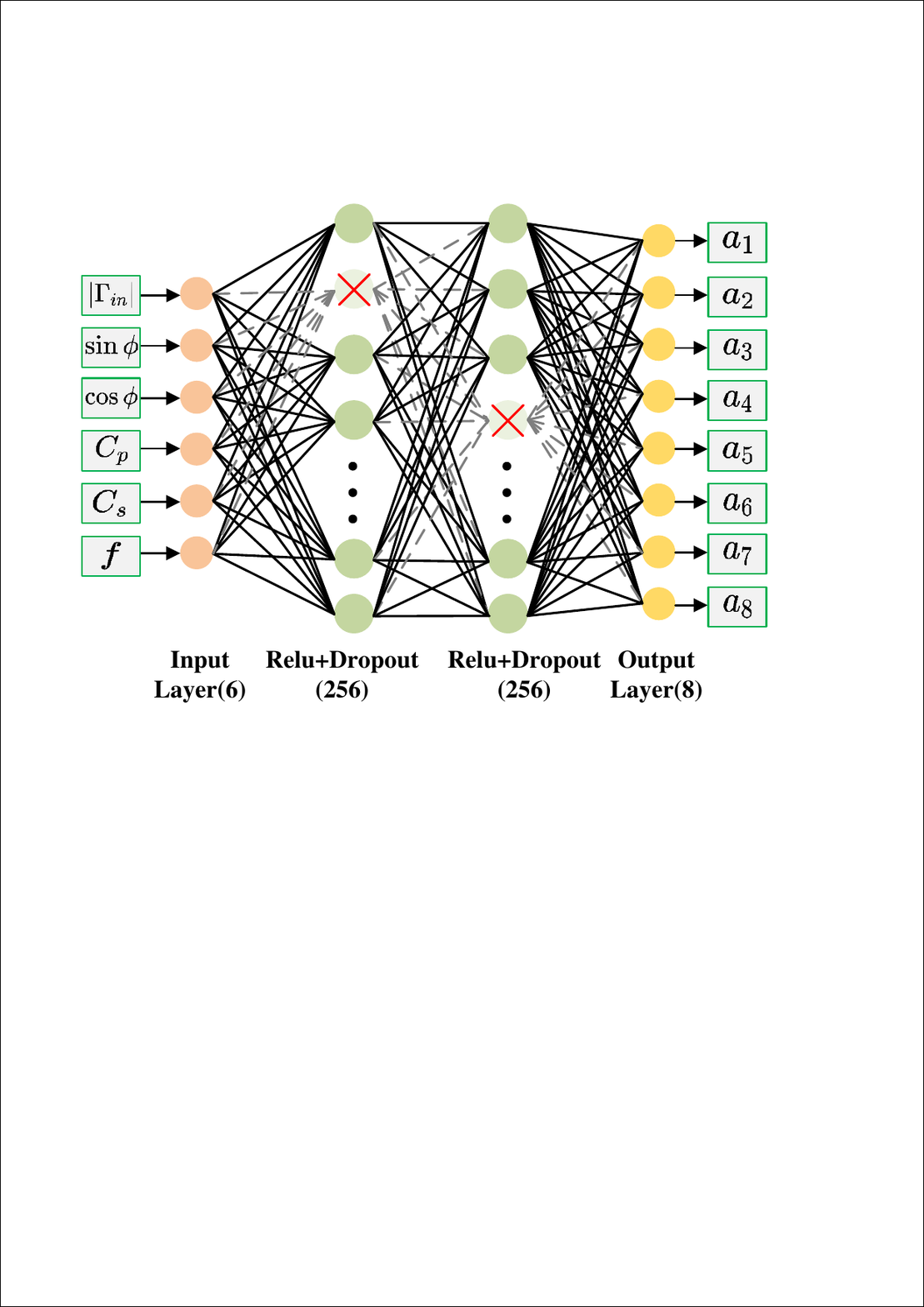}
\caption{Structure of the fully connected Q-network for impedance tuning. It comprises two hidden layers of 256 neurons, each followed by ReLU activation, and Dropout regularization (dropout rate $p=0.2$). The dashed connections with red crosses illustrate the random neuron dropout mechanism during training.
}
\label{fig:Q-net}
\end{figure}

To prevent the agent from converging to a sub-optimal policy due to insufficient exploration, we adopt the $\epsilon$-greedy strategy for decision-making. In this framework, $\epsilon$ represents the probability of performing an exploratory action, where the agent randomly selects from all available actions. Conversely, $1-\epsilon$ denotes the probability of exploiting existing knowledge, in which the agent selects the action with the highest estimated Q-value from the DDQN. Thus, the $\epsilon$-greedy policy can be expressed as 
\begin{equation}
    \pi^{\epsilon} =
\begin{cases}
\pi^*(a \mid s), & \textit{w.p.} 1-\epsilon \\[5pt]
P(a) = \dfrac{1}{|\mathcal{A}|}, & \textit{w.p. } \epsilon
\end{cases}
\end{equation}
where $\pi^*(a|s)$ denotes the greedy policy derived from the Q-network, as previously introduced in Eq. \eqref{eq:policy_expr}. In our implementation, $\epsilon$ is initialized to $\epsilon_0$ to prioritize full exploration in the early stages of tuning, and then undergoes linear decay at a fixed rate of $\epsilon_\text{decay}$ in each time interval. This decay continues until $\epsilon$ reaches a predefined lower bound $\epsilon_{\min}$.

For experience replay, we store $N_e$ recent experience tuples in the replay buffer $\mathcal{M}$ in a “first in, first out” (FIFO) queue structure. This ensures that only the most relevant, up-to-date experiences are retained, with the oldest entry automatically discarded when the buffer reaches capacity. A mini-batch of experience samples is then randomly fetched from $\mathcal{M}$  to train the DQN, which helps break temporal correlations in the training data.

\subsection{Summarizing the Work Flow of DRL-based Impedance Tuning Method}
In this subsection, we summarize the work flow of the proposed impedance tuning method based on DRL. 

\begin{figure}[!t]
\centering
\includegraphics[scale=0.45]{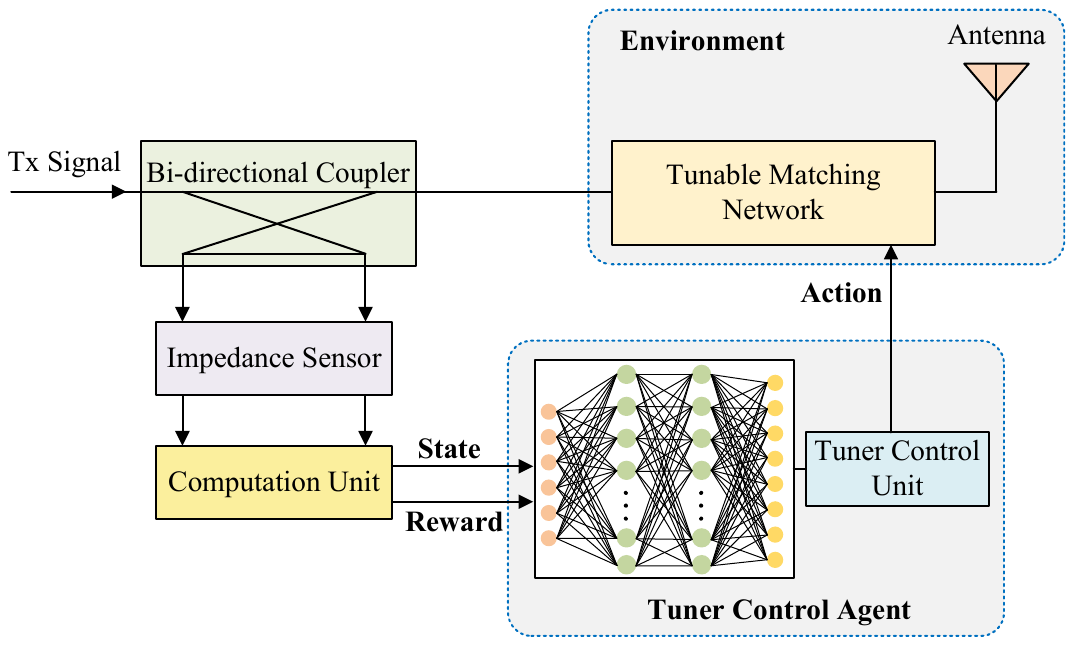}
\caption{Block diagram of an adaptive impedance matching system based DRL.}\label{fig:tuner_system} 
\end{figure}

As shown in Fig. \ref{fig:tuner_system}, in a specific tuning step, the bi-directional coupler and impedance sensor first measure the real-time reflection coefficient $\Gamma_\text{in}$ of the circuit. Subsequently, the computation unit integrates this measured reflection coefficient, along with the current component parameters of the TMN and the operating frequency, through necessary calculations to generate the input state vector $s_t$ for the tuner control agent. Based on the input state, the DDQN selects the optimal discrete action $a_t^*$ according to the learned policy $\pi^*(a|s)$, i.e., the capacitance adjustment command for the two tunable capacitors in the TMN. 
In response to the action command, the tuner control unit directly executes the capacitance adjustment according to $a_t^*$, and then the TMN updates its parameter state. The new state $s_{t+1}$ and the corresponding reward value $r_{t+1}$ (evaluated by matching performance metrics) are fed back to the RL agent to form a closed-loop tuning interaction. During the training phase, the agent collects a large number of state-transition samples $\{s_t,a_t,r_{t+1},s_{t+1}\}$ to optimize the action-value function $\widetilde{Q}(s,a;\bm{\theta})$, the detailed implementation of which is presented in Algorithm \ref{algorithm_RL}. Upon convergence, the trained DDQN model is saved for online deployment.
In the online tuning phase, the pre-trained agent taking the real-time system state $s_t$ as input, directly selects the optimal tuning action $a_t^*$ to adjust the TMN. The updated system state $s_{t+1}$ is then fed back for next online iteration until the target matching accuracy is attained.

\section{Numerical Results and Discussion}
\label{sec:numerical_results}

In this section, we first elaborate on the experimental parameter configurations. Then, we simulate extensive impedance mismatch scenarios to validate the performance of the proposed adaptive impedance matching method.

All experiments are carried out in a Python environment (version 3.9.21). The hardware platform is a workstation equipped with an Intel Xeon Gold 5218 central processing unit (CPU) @ 2.30 GHz and four NVIDIA GeForce RTX 2080 Ti graphics processing units (GPUs). Additionally, the DRL-based adaptive impedance tuning task is formulated as an MDP, and the environment is built upon the Gymnasium framework (version 1.1.1, an upgraded version of OpenAI Gym). The agent’s Q-network is trained with the PyTorch deep learning framework (version 1.13.1), leveraging GPU acceleration (CUDA 11.6).

\subsection{Experimental Setup}
An 8-dimensional discrete action space is designed to enable fixed-step adjustment of the two tunable capacitors $C_p$ and $C_s$. To eliminate dimensional discrepancies among features and ensure training stability, the 6-dimensional state space adopts targeted normalization: $C_p$, $C_s$ and $f$ are globally min-max normalized on the full load-frequency dataset, while the $\sin \phi$ and $\cos \phi$ are not normalized for their inherent range of $[-1,1]$. A multi-stage weighted reward function detailed in Section \ref{subsec:algorithm_RL_intro} guides efficient agent learning. All key experimental parameters of the proposed DRL-based impedance tuning method are summarized in Table \ref{tab:all_RL_params}.

\begin{table}
  \centering
  \caption{Experimental Parameters of the DDQN-Based Adaptive Impedance Tuning Method}
  \label{tab:all_RL_params}
  \begin{tabular}{lc}
    \toprule
    \textbf{Parameter Category} & \textbf{Value} \\
    \midrule
    
    \multicolumn{2}{l}{\textbf{Environment Configuration:}} \\
    \rowcolor{blue!10}
    Capacitance tuning range & $0.5 \sim 21$ pF \\
    Capacitance tuning resolution & $0.5$ pF \\
    \rowcolor{blue!10} Initial capacitance & 11 pF \\ 
    Termination threshold $\varepsilon$ & $0.01$\\
    \rowcolor{blue!10}
    Maximum step per episode $T_{\text{max}}$ & 1000 \\
    Maximum tuning step for test & 200 \\
    
    \addlinespace[0.3em] 
    
    \multicolumn{2}{l}{\textbf{DDQN Architecture:}} \\
    \rowcolor{green!10}
    Network structure & 2 hidden layers
    \\
    Activation function & ReLU \\
    \rowcolor{green!10}
    Regularization & 
    Dropout (0.2) \\
    Optimizer & Adam \\
    \rowcolor{green!10}
    Learning rate & $5 \times 10^{-4}$ \\
    Target network update frequency $T_{\text{net}}$ & 5000 \\
    \addlinespace[0.3em]
    
    \multicolumn{2}{l}{\textbf{Training Protocol:}} \\
    \rowcolor{yellow!10}
    Maximum episode $N_\text{ep}$ & 300 \\
    Experience replay capacity $N_m$ & 50000 \\
    \rowcolor{yellow!10}
    Mini-batch size $N_e$ & 128 \\
    Discount factor $\gamma$ & 0.95 \\
    \rowcolor{yellow!10}
    Initial exploration rate $\epsilon_0$ & 1.0 \\
    Minimum exploration rate $\epsilon_{\min}$ & 0.05 \\
    \rowcolor{yellow!10}
    Exploration rate decay $\epsilon_\text{decay}$ & $1 \times 10^{-5}$ \\
    
    \bottomrule
  \end{tabular}
\end{table}

The source impedance is fixed at 50 $\Omega$. The optimal tuning capacitances $C_p^*$ and $C_s^*$ are pre-defined in the interval of 1 pF to 21 pF with a discrete step of 0.5 pF. The operating frequency $f$ is discretized from 1 GHz to 2 GHz with a step of 0.02 GHz, yielding discrete frequency points. For each combination of $C_p^*$, $C_s^*$ and frequency $f$, the corresponding mismatched load impedance $Z_L=R_L+jX_L$ is calculated via the conjugate matching
equations Eq. \eqref{eq:conjugate matching}. The generated mismatched load impedance and their corresponding frequencies are combined to form a load-frequency sample pool as dataset. As shown in Fig. \ref{mis_imp}, the 81,600 simulated mismatched load
impedance deviates significantly from 50 $\Omega$.

\begin{figure}
\centering
\includegraphics[width=0.5\linewidth]{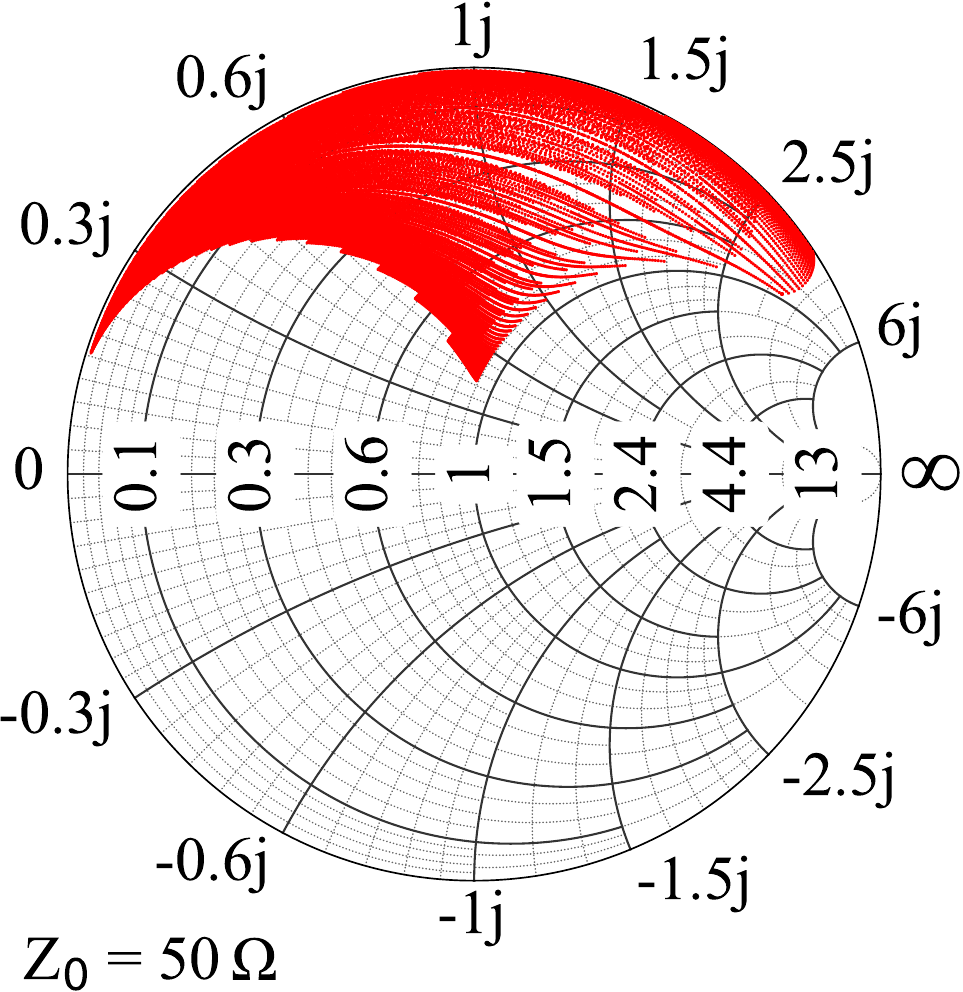}
\caption{Distribution of mismatched impedance under 81,600 simulated mismatched scenarios over the frequency range from 1 GHz to 2 GHz.}\label{mis_imp} 
\end{figure}

To ensure that the training and testing datasets follow the same distribution, we partition the data using a frequency-stratified sampling strategy. Specifically, all samples were first grouped by their operating frequency. Then, within each frequency group, the complete set of mismatched load impedance samples is randomly divided into a training set (60\%) and a
testing set (40\%).
This approach guarantees that the training and testing sets collectively cover the entire frequency spectrum and the full distribution of load. For the loss function, we employ the mean squared error (MSE) to train the Q-network, with its definition given by
\begin{equation}
    \mathcal{L}(\bm{\theta}) = \frac{1}{B} \sum_{(s,a,r,s') \in \mathcal{B}} \left( \widetilde{Q}(s, a; \bm{\theta}) - y_i \right)^2,
\end{equation}
where $\mathcal{B}$ denotes the mini-batch of sampled experience tuples, $B$ is the mini-batch size, $\widetilde Q(s,a; \bm{\theta})$ represents the predicted Q-value of the current Q-network, and $y_i$ is the target Q-value derived from the target Q-network, which is calculated by Eq. \eqref{eq:T_DDQN}. 
Training is performed with PyTorch’s distributed data parallel (DDP) framework on a NVIDIA GeForce RTX 2080 Ti GPU, with a total training time of only 149.99 seconds.

\subsection{Performance of DRL-based Impedance Matching Method}

The impedance tuning agent is trained over a series of episodes, with its training process shown in Fig. \ref{fig:train_process}. The agent exhibits stable convergence after approximately 100 episodes. As shown in Fig. \ref{fig:total_reward}, the cumulative reward per episode initially fluctuates significantly but gradually stabilizes around the zero value after the early training phase, indicating that the agent has learned an effective policy to maximize the cumulative reward. 
Meanwhile, the final reflection coefficient magnitude $|\Gamma_\text{in}|$ (shown in Fig. \ref{fig:T_train}) remains well below the -40 dB (i.e., 0.01) target threshold for most episodes after convergence, with only occasional transient spikes in the early training phase.
These spikes are primarily attributable to the stochastic variation of the load per episode and residual exploration. These results further verify the robustness and reliability of the learned impedance tuning policy. It is worth noting that the agent completes training within only 300 episodes, where one mismatched load is randomly sampled per episode from the training set. This indicates the proposed method yields fast convergence and high sample efficiency, requiring only a small portion of the training dataset.

\begin{figure}
\centering
\subfigure[]{\includegraphics[scale=0.245]{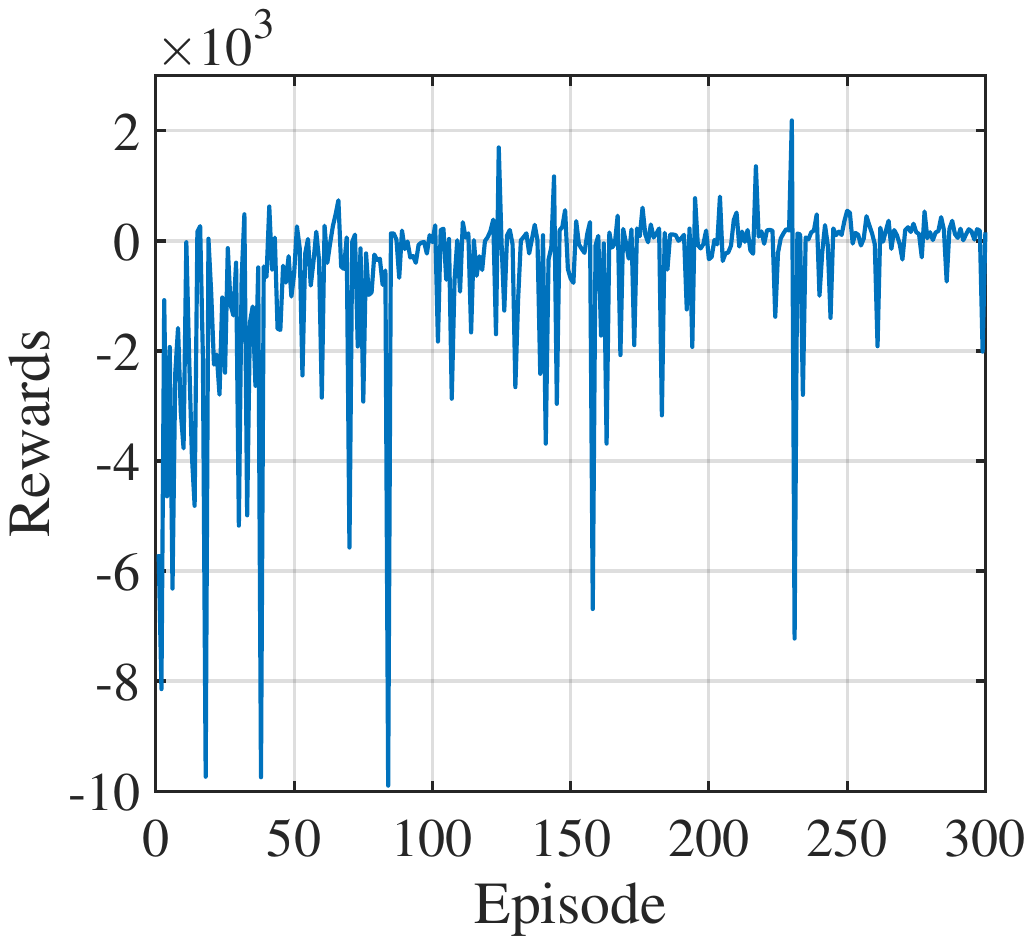}\label{fig:total_reward}}
 \hspace{0.1cm} 
\subfigure[]{\includegraphics[scale=0.245]{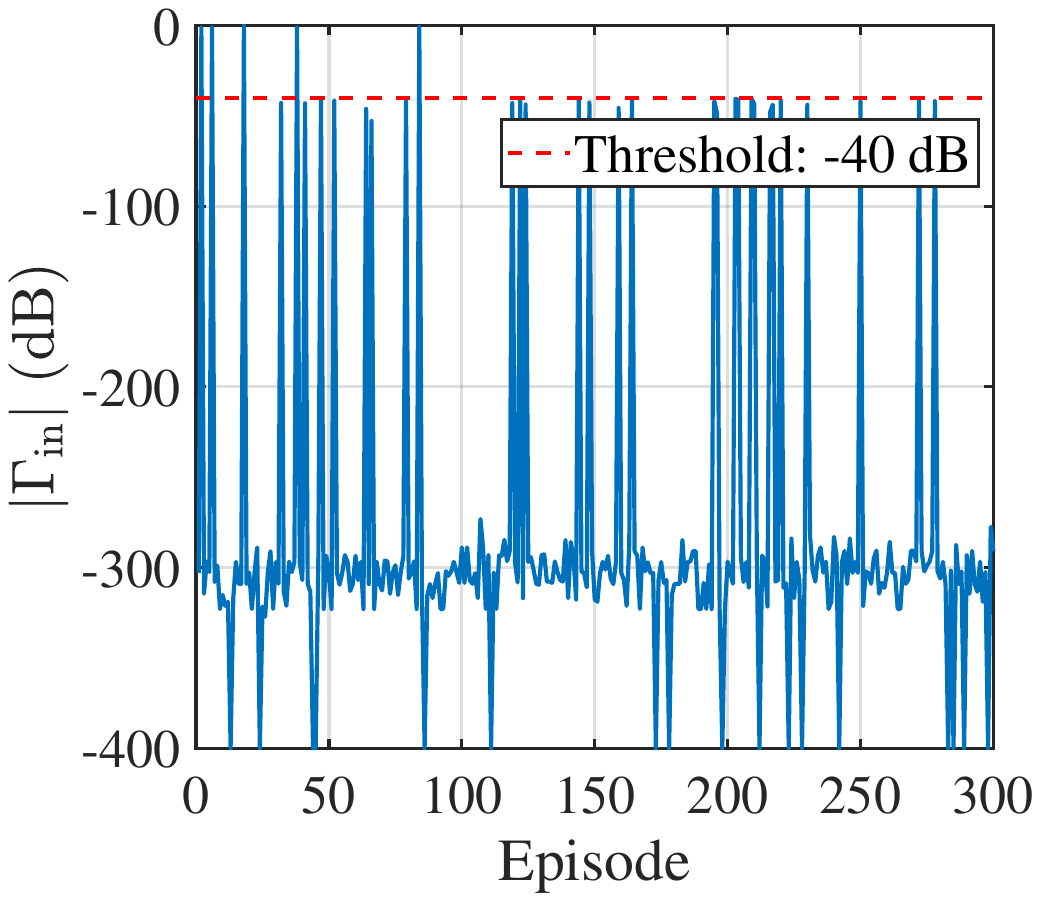} \label{fig:T_train}}
\caption{Training dynamics of the impedance tuning agent. (a) Cumulative reward per episode. (b) Final reflection coefficient magnitude $|\Gamma_\text{in}|$ per episode, with the dashed red line indicating the -40 dB threshold.}
\label{fig:train_process}
\end{figure}

To further evaluate the impedance tuning agent’s performance in adaptive impedance matching, we utilized the test set of 32,640 samples to assess its generalization capability on unseen mismatched scenarios. Baseline methods for comparison include heuristic algorithms (GA \cite{refga} and SAPSO \cite{sapso}), and adaptive moment
estimation with automatic differentiation (AD-Adam) \cite{cheng2025data-driven}. The detailed impedance tuning procedures of SAPSO and AD-Adam are described in \cite{cheng2025data-driven}, and the hyperparameter settings of all three baseline methods are presented in Table \ref{tab:hyperparameters}.
\begin{figure}
\centering
\subfigure[]{\includegraphics[width=0.24\textwidth]{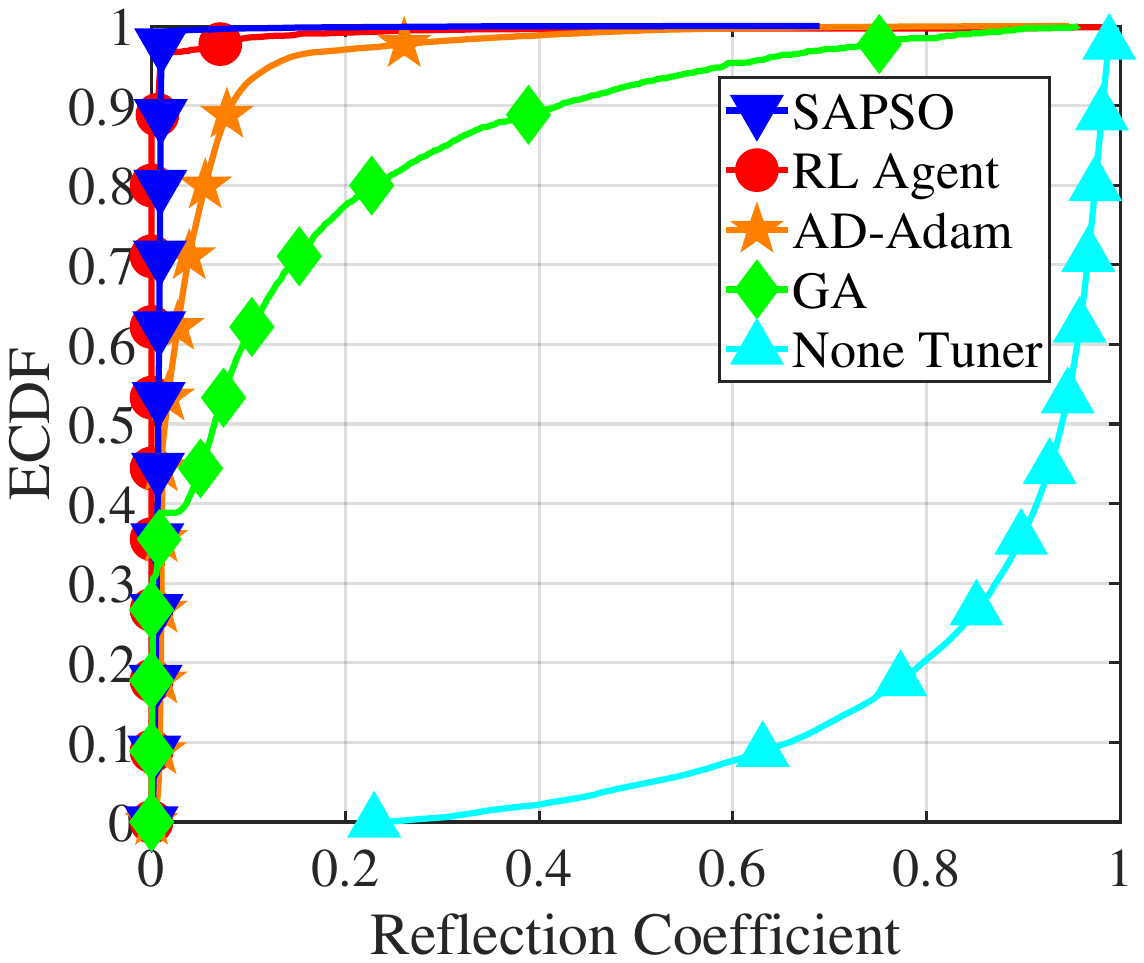}\label{fig:T_CDF_global}}
\hfill  
\subfigure[]{\includegraphics[width=0.24\textwidth]{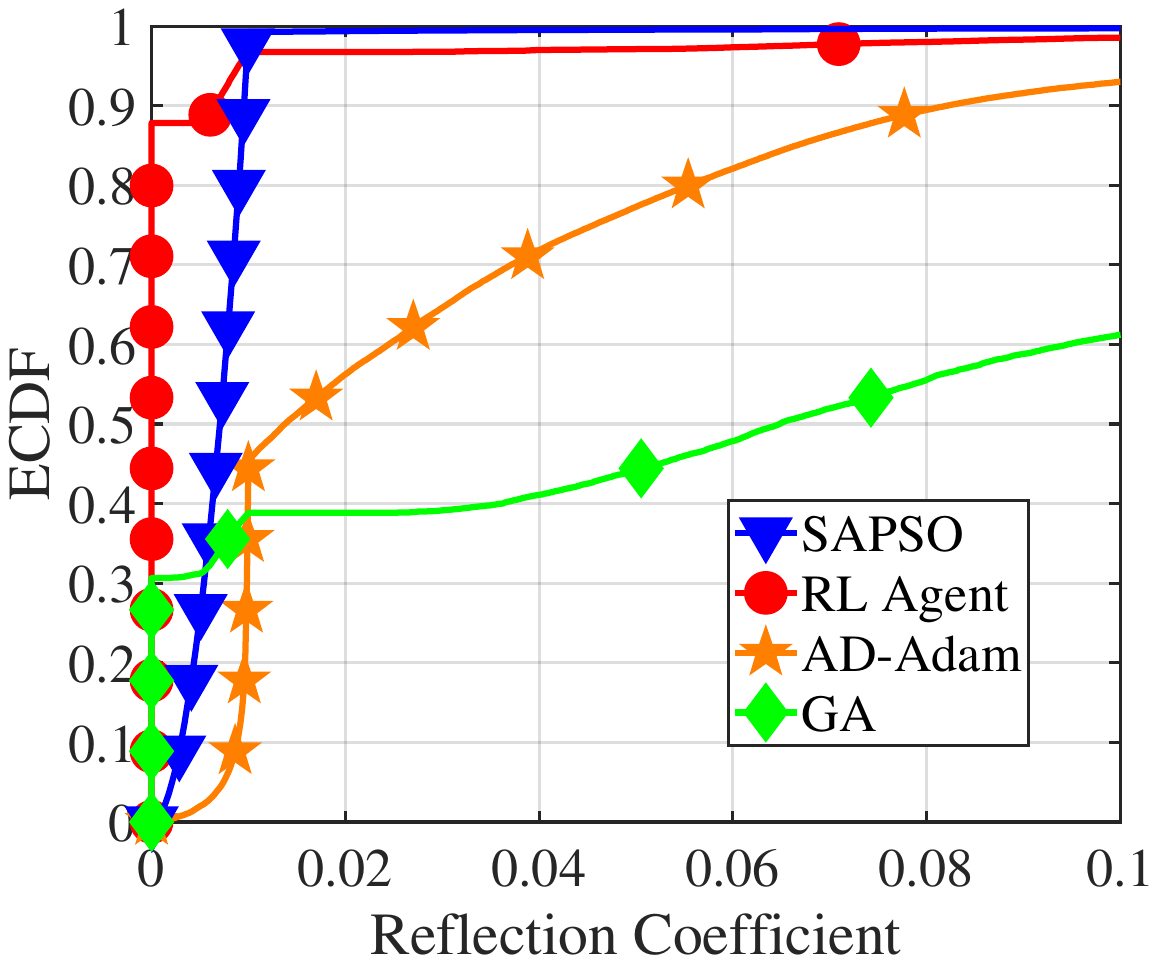}\label{fig:T_CDF_mini}}
\caption{ECDF of the tuned reflection coefficient for the DRL-based impedance tuning agent and baseline methods. (a) Full-range view comparing all methods.
(b) Zoomed-in view for $|\Gamma_\text{in}| \in [0, 0.1]$.}
\label{fig:T_CDF}
\end{figure}
Fig. \ref{fig:T_CDF} presents the empirical cumulative distribution function (ECDF) of the tuned reflection coefficient magnitudes obtained with different matching methods across all test scenarios. In practical engineering applications, a reflection coefficient magnitude below 0.2 is widely adopted as the threshold for high-quality impedance matching \cite{cheng2025data-driven}, corresponding to approximately 96\% of the incident power being delivered to the antenna. Based on this criterion, SAPSO achieves the highest matching accuracy (99.85\% of samples below 0.2), with the proposed DRL-based approach exhibiting slightly inferior yet closely comparable accuracy (99.21\% of samples below 0.2). In comparison, GA delivers lower accuracy (77.45\% of samples below 0.2) than both SAPSO and the proposed method, whereas AD-Adam achieves 97.06\% of samples below the 0.2 threshold. The “None Tuner” case performs poorest, with no samples meeting the 0.2 threshold, highlighting the necessity of the impedance tuner in mismatched scenarios.
\begin{table}
\centering
\caption{Hyperparameter settings for the matching methods SAPSO, AD-Adam and GA}
\label{tab:hyperparameters}
\rowcolors{2}{white}{blue!10}
\begin{tabular}{lccc}
\toprule
\textbf{Parameter} & \textbf{SAPSO} & \textbf{AD-Adam} & \textbf{GA} \\
\midrule
Number of particles & \(20\) & — & — \\
Individual learning factor &$1.5$ & — & — \\
Social learning factor &$1.5$ & — & — \\
Cooling factor & \(0.99\) & — & — \\
Initial capacitances  & — & \(11\,\mathrm{pF}\) & — \\
Learning rate & — & 0.1 & — \\
Exponential decay rates & — & \(0.9,\, 0.999\) & — \\
Stability constant & — & \(10^{-8}\) & — \\
Population size  & — & — & \(20\) \\
Crossover probability  & — & — & \(0.8\) \\
Mutation probability  & — & — & \(0.1\) \\
Maximum iterations & \(200\) & \(200\) & \(200\) \\
Termination threshold & \(0.01\) & \(0.01\) & \(0.01\) \\
\bottomrule
\end{tabular}
\end{table}
To further compare the matching precision of different tuning methods in the high-performance region, Fig. \ref{fig:T_CDF_mini} shows a zoomed-in view of the ECDF curves for $|\Gamma_\text{in}| \leq 0.1$. The ECDF curve of the RL agent rises steeply to a cumulative probability of 96.73\% at the reflection coefficient of 0.01, indicating that the vast majority of its test cases achieve the reflection coefficient below 0.01. In contrast, the ECDF curves of SAPSO and the AD-Adam method rise more gradually, with their cumulative probabilities reaching approximately 99.25\% and 45.58\% at the reflection coefficient of 0.01, respectively. These results confirm that the DRL-based impedance tuning agent achieves competitive matching accuracy compared with SAPSO.

In addition to the ECDFs of the tuned reflection coefficient
magnitude for each matching method, we also summarize the overall mean, median and standard deviation (SD) of the tuned reflection
coefficient magnitudes across all test set. As shown in Table \ref{tab:cross_statistics}, the RL agent and SAPSO achieve mean values well below the 0.01 matching target. The RL agent further yields a median of effectively zero, indicating most test cases achieve perfect matching. In contrast, AD-Adam and GA yield substantially higher means, reflecting inferior overall performance. In terms of stability, SAPSO has the smallest SD well below 0.02, followed closely by the RL agent, while AD-Adam and GA show larger variability. 
\begin{table}
\centering
\caption{Descriptive statistics of the tuned reflection coefficient magnitudes for the four matching methods}
\label{tab:cross_statistics}
\rowcolors{2}{white}{blue!10} 
\begin{tabular}{lccc}
\toprule
\textbf{Method} & \textbf{Mean} & \textbf{Median} & \textbf{SD}\\
\midrule
GA & 0.13680 & 0.06483 & 0.19308 \\
AD-Adam & 0.04027 & 0.01376 & 0.07248 \\
SAPSO& 0.00742 & 0.00706 & 0.01385\\
RL Agent & 0.00718 & 0.00000 & 0.05821\\
\bottomrule
\end{tabular}
\end{table}

To validate the prediction accuracy of optimal TMN component values, Fig. \ref {fig:CsCp_predction_edcf} presents prediction results for optimal capacitances $C_s^*$ and $C_p^*$. As shown in Fig. \ref{fig:CsCp_ecdf}, the proposed DRL-based impedance tuning method enables high prediction precision for both capacitances: the capacitance $C_p^*$ achieves a relative error below 1\% for approximately 97.78\% of samples, and the capacitance $C_s^*$ achieves a relative error below 5\% for approximately 98.77\% of samples. Notably, $C_s^*$ exhibits a slightly higher relative error distribution compared to $C_p^*$, which can be attributed to the stronger coupling between the series capacitance and the load variation, making it more challenging to estimate precisely. These results confirm that our approach maintains accurate prediction of the TMN’s component values, demonstrating its high-performance impedance matching capability.

\begin{figure}[!t]
\centering
\subfigure[]{\includegraphics[scale=0.25]{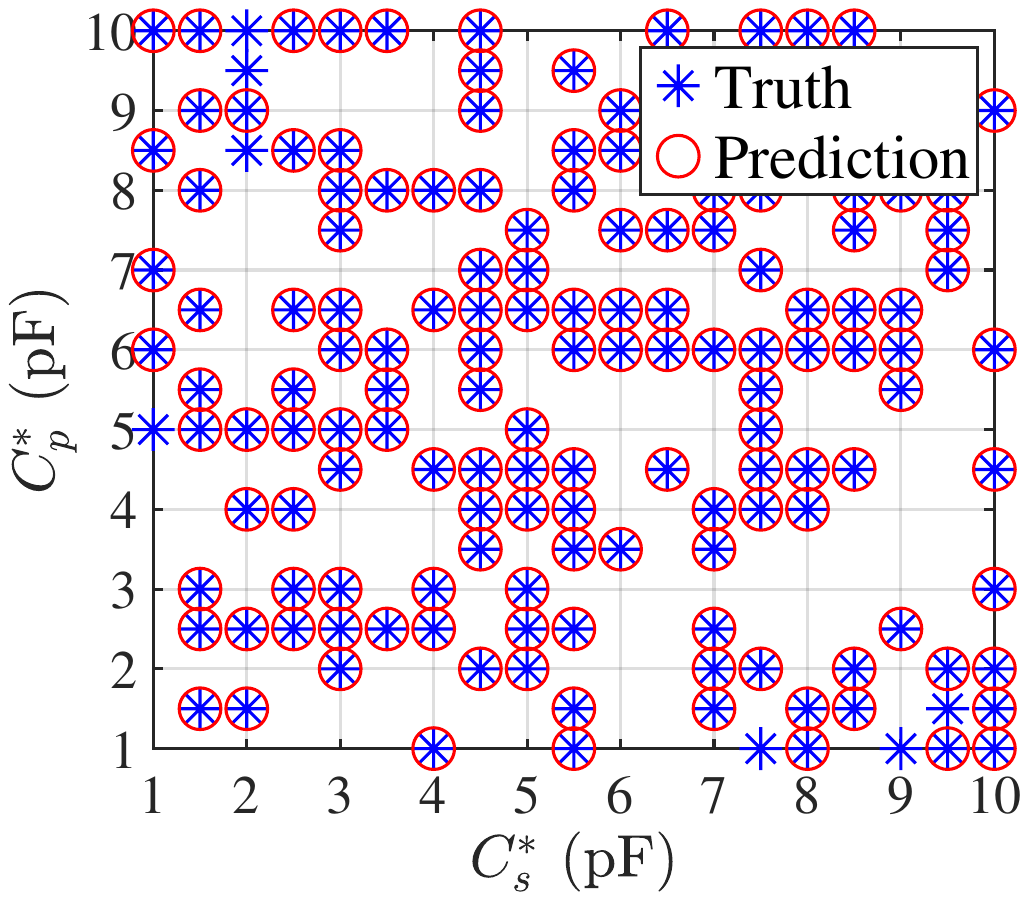}\label{fig:CsCp_predction}}
\hspace{0.1cm}
\subfigure[]{\includegraphics[scale=0.25]{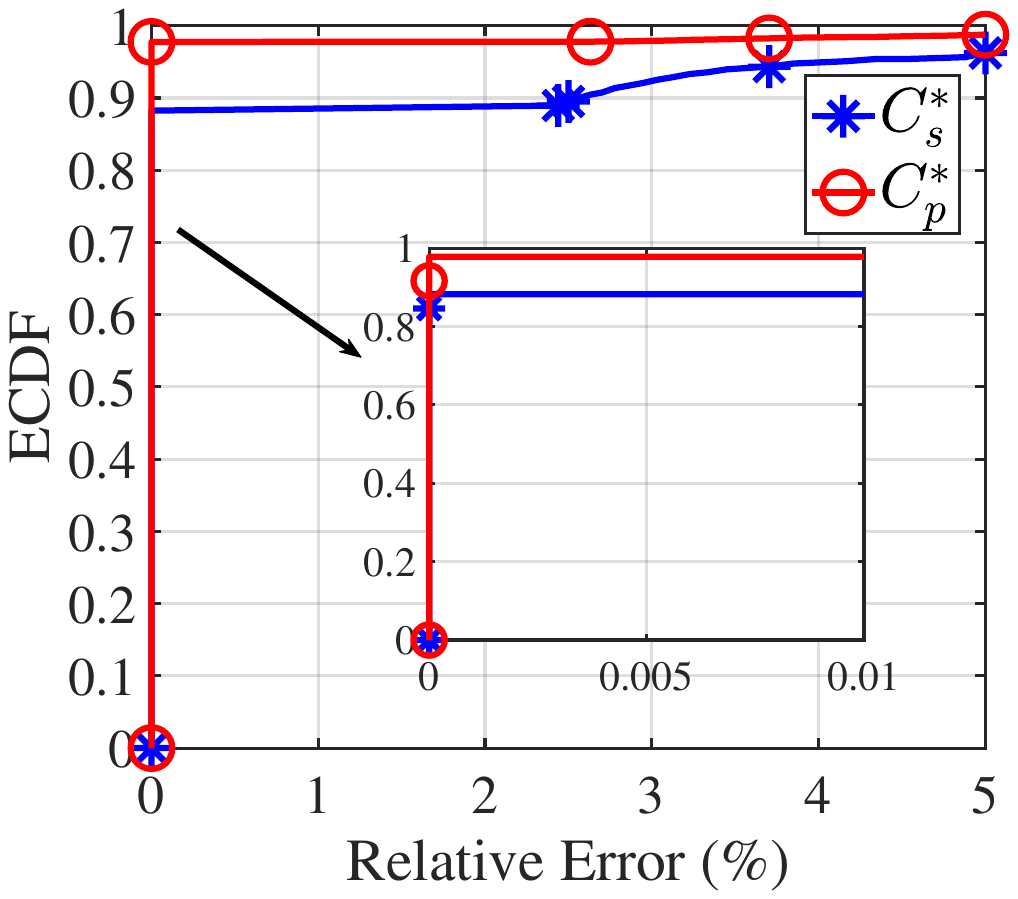}\label{fig:CsCp_ecdf}}
\caption{The matching solution predicted by the proposed DRL-based method. (a) Partial predicted versus true values for optimal capacitances $C_s^*$ and $C_p^*$. (b) ECDF of relative errors for the capacitances $C_s^*$ and $C_p^*$.}
\label{fig:CsCp_predction_edcf}
\end{figure}

In addition to matching accuracy, tuning efficiency is another critical metric for practical impedance matching systems. To evaluate the tuning speed fairly, all impedance tuning methods are executed on the same CPU platform. Note that the DRL-based tuning method is trained on a GPU for online policy learning, while its inference for online tuning is performed on the CPU to ensure a consistent and fair comparison with conventional optimization methods.
Table \ref{tab:tuning_efficiency} presents the tuning efficiency comparison of different impedance tuning methods on the test dataset, including the average tuning steps per test sample, average single-step tuning time, and total execution time.
\begin{table}
\centering
\caption{Tuning Efficiency Metrics of Different Impedance Tuning Methods on the Test Dataset.}
\label{tab:tuning_efficiency}
\rowcolors{2}{white}{blue!10} 
\begin{tabular}{lcccc}
\toprule
\textbf{Metric} & \textbf{AD-Adam} & \textbf{GA} & \textbf{SAPSO}  & \textbf{RL Agent} \\
\midrule
Avg. tuning steps & 165.3 & 129.5 & 24.5 & 21.5 \\
Avg. step time (ms) & 0.76 & 0.44 & 0.82 & 0.33 \\
Execution time (s) & 4099.16 & 1843.00 & 652.26 & 233.50 \\
\bottomrule
\end{tabular}
\end{table}
The RL agent requires only 21.5 average tuning steps per test sample, which is comparable to SAPSO but drastically fewer than those required by AD-Adam and GA. Meanwhile, the RL agent achieves the shortest average single-step time of 0.33 ms, outperforming baseline methods. Consequently, the total execution time of the RL agent is only 233.50 s, which is approximately 17.5 times faster than AD-Adam, nearly 7.9 times faster than GA and approximately 2.8 times faster than SAPSO. The superior tuning efficiency of the DRL-based method originates from its inference-based online tuning mechanism. Once the policy network is trained, it can directly output the optimal tuning action at each step through trained Q-network forward computation, without any iterative optimization or gradient update. In contrast, conventional algorithms must conduct independent and repetitive iterative searches for each individual test sample, which induce substantial redundant computation and execution time.

In summary, the proposed DRL-based impedance tuning method achieves high matching accuracy while exhibiting significantly faster tuning speed than conventional optimization algorithms. These results validate that the DRL-based tuning policy is effective and efficient for impedance matching systems.
\subsection{On the Role of Exploration in Robust Impedance Matching}
As shown in Fig. \ref{fig:T_CDF}, the DRL-based tuning agent achieves excellent impedance matching performance but is slightly outperformed by SAPSO. Table \ref{tab:cross_statistics} further reveals that while their mean reflection coefficients are nearly identical, the RL agent’s SD is approximately four times larger than  that of SAPSO, indicating a few suboptimal tuning cases.
\begin{figure}[!t]
\centering
\subfigure[]{\includegraphics[scale=0.25]{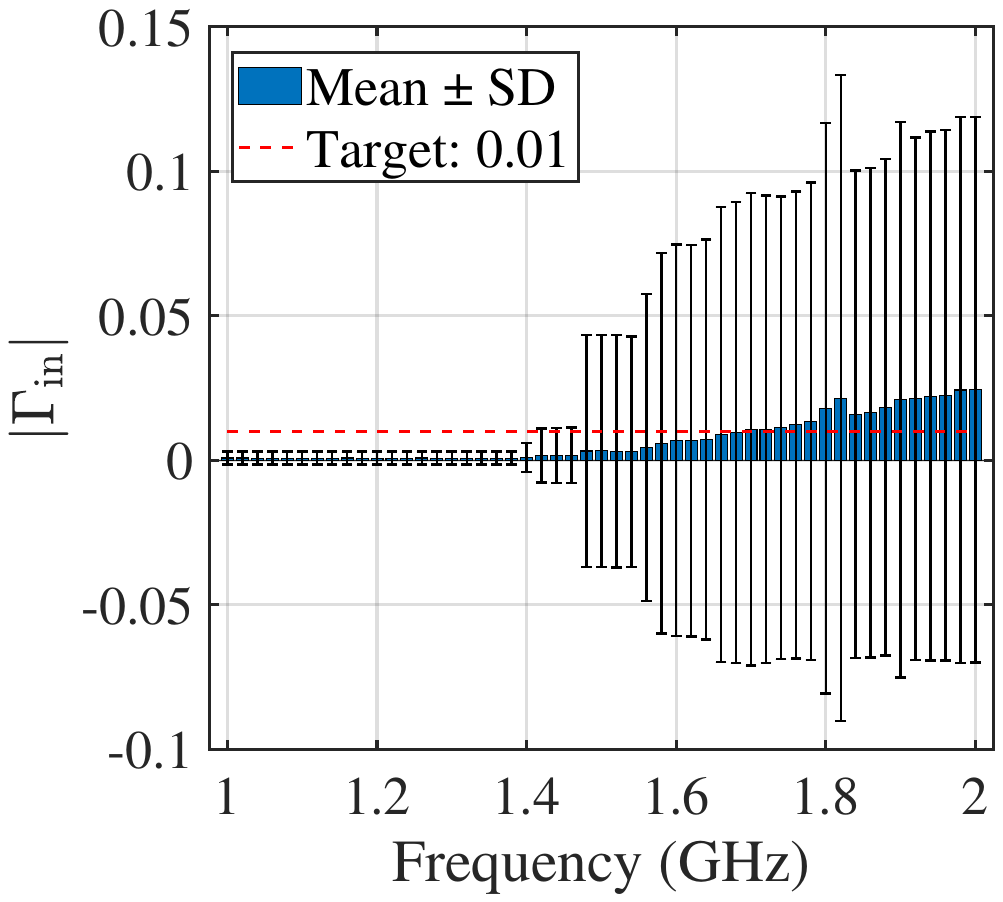}\label{fig:reflec_coeff_vs_freq}}
\hspace{0.1cm}
\subfigure[]{\includegraphics[scale=0.25]{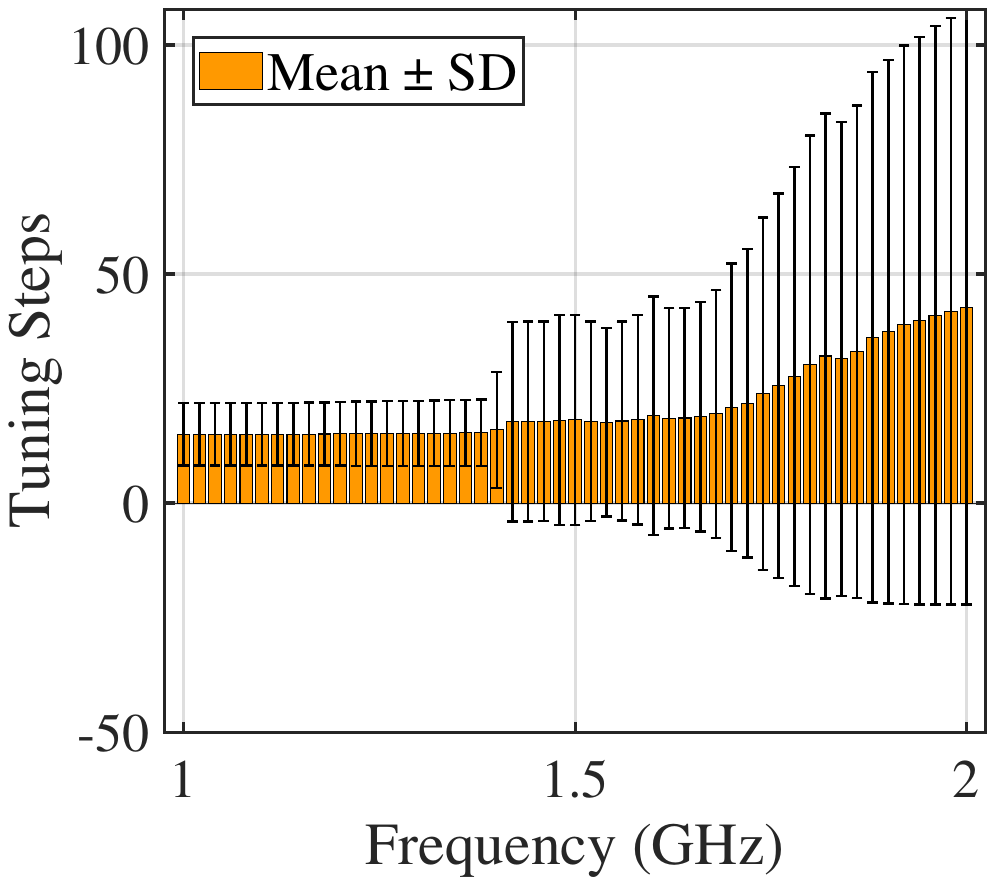}\label{fig:tuning_steps_vs_freq}}
\caption{Frequency-domain performance of the RL agent on test set. (a) Mean $\pm$ SD of relection coefficient $|\Gamma_\text{in}|$, with the dashed line denoting the target threshold. (b) Mean $\pm$ SD of the tuning steps required for impedance matching.}
\label{fig:RL_freq_domain_performance}
\end{figure}
The frequency-domain results shown in Fig. \ref{fig:RL_freq_domain_performance} further reveal that the large SD of the RL agent is primarily attributable to high-frequency variability, while performance remains stable at low frequencies. As shown in Fig. \ref{fig:reflec_coeff_vs_freq}, both the mean and SD of $|\Gamma_\text{in}|$ grow rapidly with frequency, indicating increased matching uncertainty in the high-frequency band. 
Meanwhile, Fig. \ref{fig:tuning_steps_vs_freq} demonstrates that the number of tuning steps also increases markedly and exhibits great fluctuations in the high-frequency region, confirming reduced stability and higher tuning cost at high frequencies.

To elucidate the physical origin of the frequency-dependent performance degradation, Fig. \ref{fig:reflec_coeff_vs_CSCp} illustrates the $|\Gamma_\text{in}|$ surface as a function of $C_s$ and $C_p$. At 1 GHz, the $|\Gamma_\text{in}|$ surface exhibits a single, broad global minimum, forming a convex landscape that enables straightforward convergence. At 2 GHz, the surface becomes markedly steep: the global minimum narrows into a deep valley, accompanied by several secondary local minima. This topological variation directly increases the tuning optimization difficulty at high frequencies. As a result, the RL agent suffers from dramatic fluctuations in $|\Gamma_\text{in}|$ and tuning steps  at high frequencies, consistent with the observed frequency-domain performance.
\begin{figure}
\centering
\subfigure[]{\includegraphics[scale=0.22]{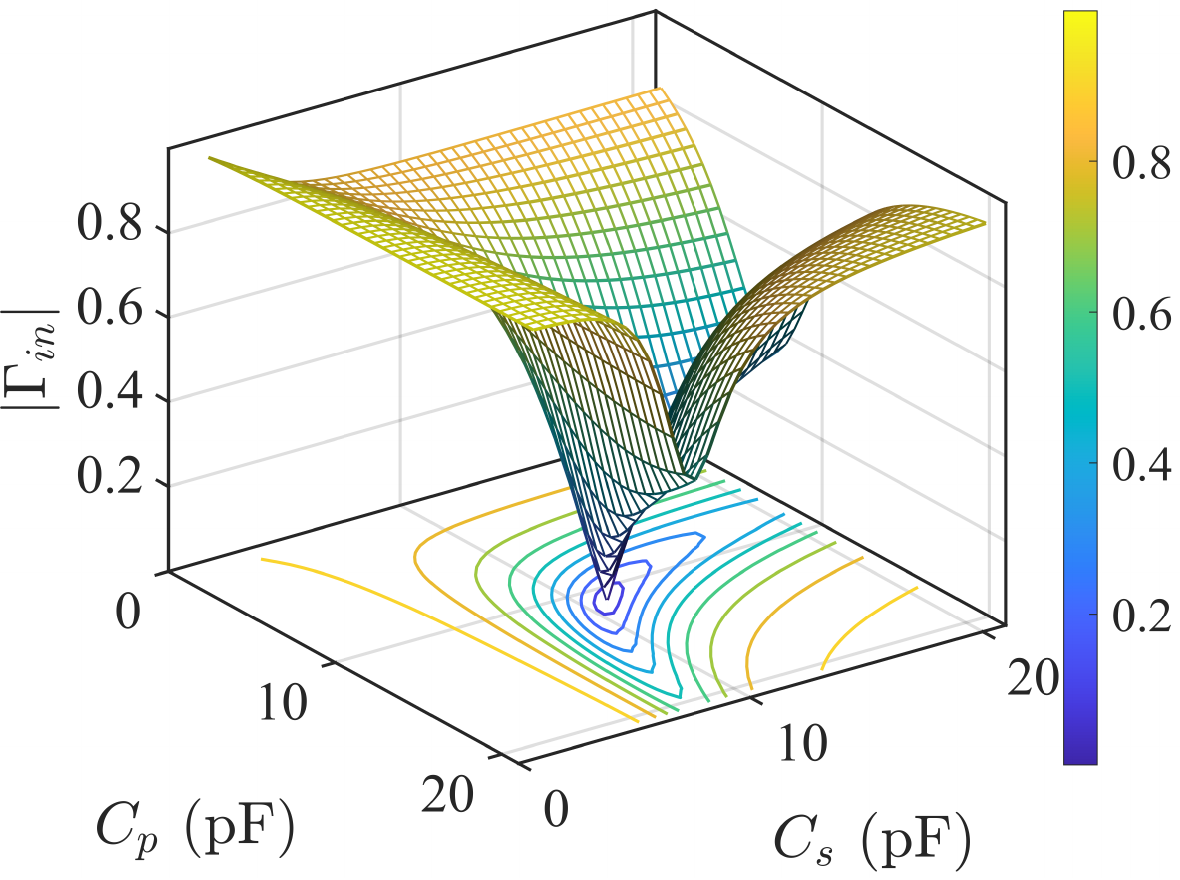}\label{fig:reflec_coeff_vs_CSCp_low_freq}}
\hfill
\subfigure[]{\includegraphics[scale=0.22]{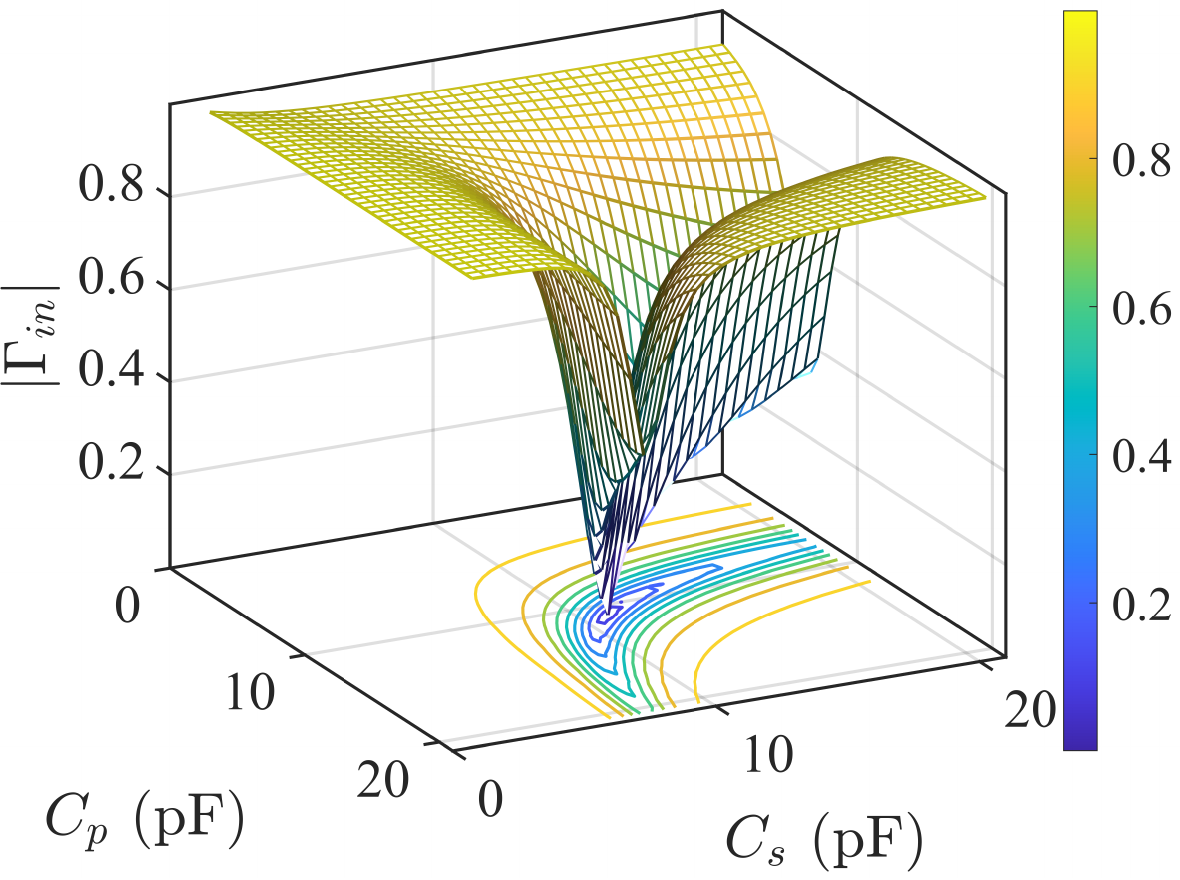}\label{fig:reflec_coeff_vs_CSCp_high_freq}}
\caption{Magnitude of the input reflection coefficient \(|\Gamma_\text{in}|\) as a function of tuning capacitances \(C_s\) and \(C_p\) at (a) 1 GHz and (b) 2 GHz, with the global optimal point at \(C_s^* = C_p^* = 11\ \text{pF}\).}
\label{fig:reflec_coeff_vs_CSCp}
\end{figure}
For both low and high-frequency load samples, the tuning agent proposed in Section \ref{subsec:algorithm_RL_intro} uses the same action space with a fixed tuning step $\Delta C$. This step is suitable for low-frequency scenarios but becomes excessive at high frequencies. Since the impedance and reflection coefficient are more sensitive to capacitance variations at high frequencies, the coarse step tends to induce tuning oscillations and degrade high-frequency performance. Thus, an intuitive improvement is to introduce finer steps (e.g., $\Delta C /2$, $\Delta C /3$) into the original 8-action space to better match the sensitive high-frequency response. However, introducing finer tuning steps extends the action space, which increases training overhead and model complexity. 

To address this issue, this paper introduce a simple yet effective  solution without expanding the action space or introducing extra training overhead. By maintaining a certain action exploration rate during the testing phase, the tuning stability of the agent is improved, thus alleviating oscillation and convergence degradation in high-frequency impedance tuning. To this end, we define the test-phase exploration rate $\epsilon_\text{test}$: during testing, the agent selects a random action with probability $\epsilon_\text{test}$ and the optimal action via the pre-trained Q-network with probability $1-\epsilon_\text{test}$. 
To validate the effectiveness of the proposed test-phase exploration strategy, we conduct experiments under different values of $\epsilon_\text{test}$. Table \ref{tab:cross_statistics_with_epsilon} summarizes the statistics of the tuned reflection coefficient magnitudes for different $\epsilon_\text{test}$ values, with SAPSO as the baseline. It can be observed that as $\epsilon_\text{test}$ increases, both the mean and SD of the reflection coefficient magnitude decrease significantly, indicating improved tuning accuracy and stability of the agent. Notably, for $\epsilon_\text{test} \geq 0.10$, the reduction in SD becomes even more pronounced than that of the mean, enabling the RL agent to outperform SAPSO in both metrics. Additionally, as shown in Fig. \ref{fig:ecdf_explore}, the RL agent with a mere 5\% test-phase exploration achieves superior matching accuracy compared to SAPSO, with 99.6\% of the test samples satisfying $|\Gamma_\text{in}| < 0.01$.
\begin{table}
\centering
\caption{Descriptive statistics of the tuned reflection coefficient magnitudes for different test-phase exploration rates}
\label{tab:cross_statistics_with_epsilon}
\rowcolors{2}{white}{blue!10} 
\begin{tabular}{lcc}
\toprule
\textbf{Method} & \textbf{Mean} & \textbf{SD} \\
\midrule
SAPSO & 0.00742 & 0.01385\\
Agent ($\epsilon_\text{test}=0.00$) & 0.00718 & 0.05821\\
Agent ($\epsilon_\text{test}=0.05$) & 0.00146 & 0.02164\\
Agent ($\epsilon_\text{test}=0.10$) & 0.00088 & 0.01258\\
Agent ($\epsilon_\text{test}=0.20$) & 0.00072 & 0.00601\\
Agent ($\epsilon_\text{test}=0.30$) & 0.00067 & 0.00220\\
\bottomrule
\end{tabular}
\end{table}

\begin{figure}
\centering
\includegraphics[width=0.65\linewidth]{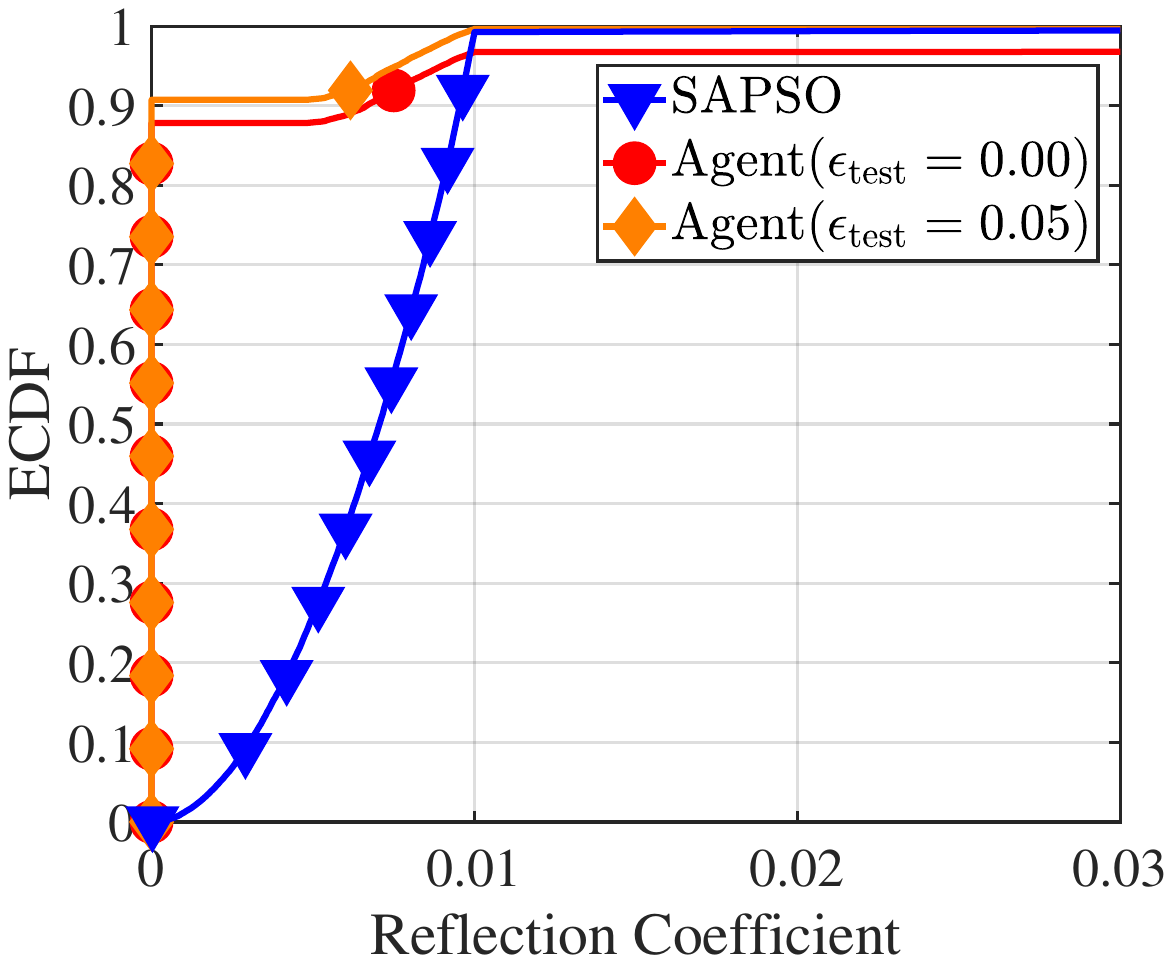}
\caption{ECDF of the tuned reflection coefficient magnitudes for SAPSO and the RL agent with $\epsilon_\text{test}=0.00$ and $\epsilon_\text{test}=0.05 $.}\label{fig:ecdf_explore} 
\end{figure}

To further elaborate the frequency-domain statistical characteristics, Fig. \ref{fig:RL_freq_domain_performance_0.3} presents the detailed results of the RL agent with $\epsilon_\text{test}=0.3$ across the test set. As depicted in Fig. \ref{fig:reflec_coeff_vs_freq_0.3}, the mean value of $|\Gamma_\text{in}|$ remains consistently low (below $1 
\times 10^{-3}$) across the entire frequency band. More importantly, the SD is significantly suppressed below $3 
\times 10^{-3}$ throughout the frequency range. Meanwhile, the tuning steps in Fig. \ref{fig:tuning_steps_vs_freq_0.3} exhibit highly stable behavior with considerably reduced variability.
Compared with the baseline agent ($\epsilon_\text{test}=0$) in Fig. \ref{fig:RL_freq_domain_performance}, Fig. \ref{fig:RL_freq_domain_performance_0.3} shows that the proposed test-phase exploration strategy achieves a remarkable balance between high accuracy and robust stability. These results clearly demonstrate its effectiveness in mitigating the severe oscillations and high variability inherent in high-frequency impedance matching.
\begin{figure}[!t]
\centering
\subfigure[]{\includegraphics[scale=0.25]{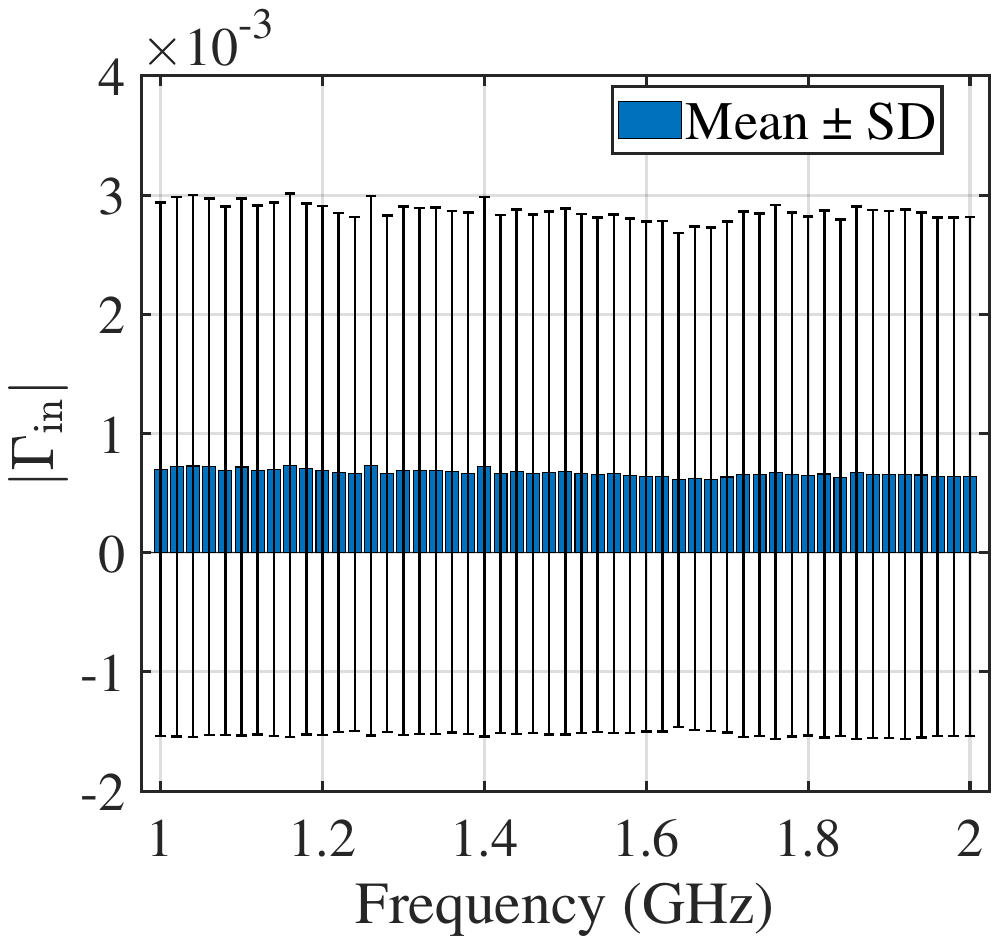}\label{fig:reflec_coeff_vs_freq_0.3}}
\hspace{0.1cm}
\subfigure[]{\includegraphics[scale=0.25]{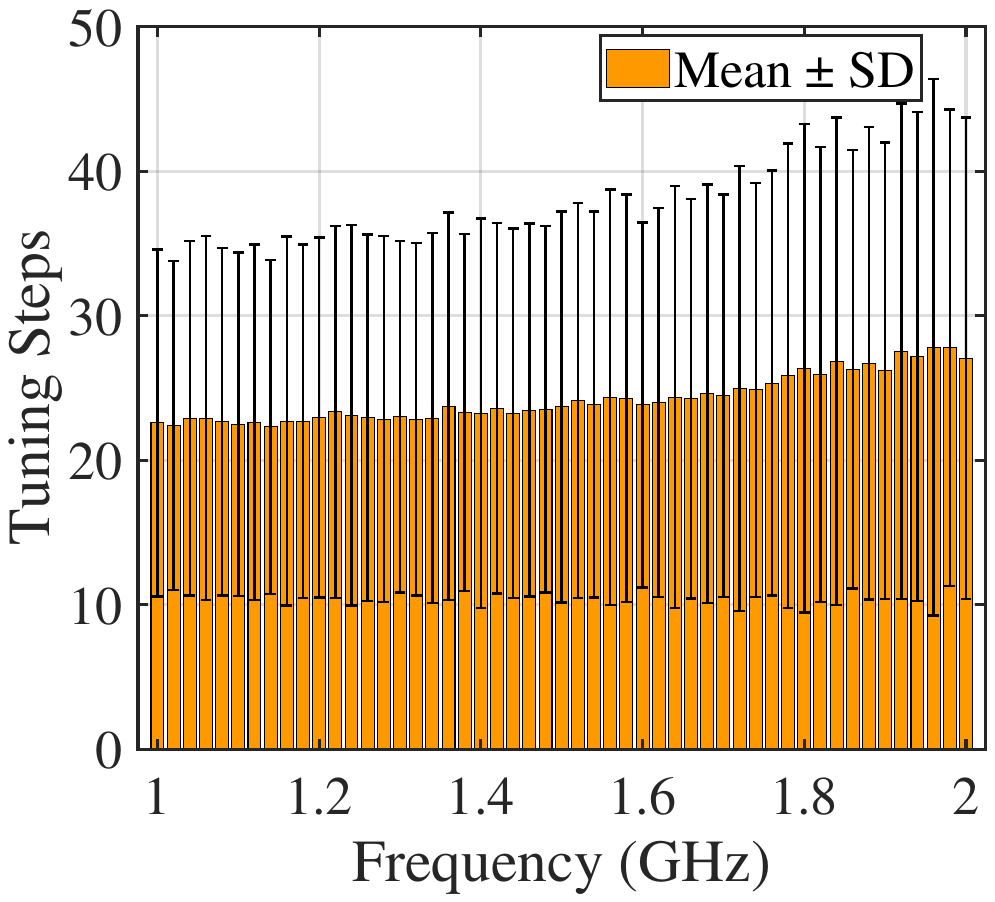}\label{fig:tuning_steps_vs_freq_0.3}}
\caption{Frequency-domain performance of the RL agent with $\epsilon_\text{test} = 0.3$ across the test set.
(a) Mean $\pm$ SD of reflection coefficient $|\Gamma_\text{in}|$.
(b) Mean $\pm$ SD of the tuning steps required for impedance matching.
}
\label{fig:RL_freq_domain_performance_0.3}
\end{figure}
Meanwhile, the test-phase exploration strategy also leads to a substantial reduction in the total execution time required for the agent to complete matching across all test samples. This is primarily attributable to the effective mitigation of high-frequency tuning oscillations, which consume substantial computational time during the matching process. The total execution times of the agent under different test-phase exploration rates, together with their corresponding matching accuracies, are presented in Table \ref{tab:time_acccuracy_explore}.

\begin{table}
\centering
\caption{Execution Time and Matching Accuracy of the RL Agent Under Different Test-Phase Exploration Rates}
\label{tab:time_acccuracy_explore}
\rowcolors{2}{white}{blue!10} 
\begin{tabular}{ccc}
\toprule
  Exploration Rate 
 & Execution Time (s) & $|\Gamma_{\rm in}| < 0.01$ (\%)  \\
\midrule
0.00 & 233.50 & 96.7\\
0.05 & 144.91 & 99.6\\
0.10 & 150.90 & 99.9\\
0.20 & 160.62 & 100.0\\
0.30 & 163.16 & 100.0\\
\bottomrule
\end{tabular}
\end{table}

The performance gain from test-phase exploration stems from the distinct impedance matching solution spaces across frequencies. At low frequencies, where the solution space is smooth, occasional suboptimal actions can be corrected by the agent in subsequent steps with minimal performance degradation. In contrast, at high frequencies, the solution space becomes steeper with numerous local optima, making a deterministic greedy policy prone to trapping the agent in local oscillations and preventing stable convergence. Random action exploration provides an effective mechanism to escape from these local optima, enabling the agent to discover better matching points. 
Therefore, the test-phase exploration strategy significantly enhances the convergence and stability of high-frequency tuning while maintaining the performance of low-frequency tuning.

\section{Conclusion}
\label{sec:conclusion}
In this paper, we have proposed a DRL-based adaptive impedance matching method, achieving significant improvements in tuning accuracy, speed, and stability. First, we have formulated the impedance tuning problem as an optimal control problem, and employed DRL to approximate the optimal control law in a data-driven manner. Then, we have designed a tailored DRL framework for the impedance tuning task, featuring a compact state representation and a piecewise reward function designed specifically for this task. Finally, to mitigate high-frequency tuning variance and oscillations, we have introduced a test-phase exploration mechanism that effectively enhances tuning stability without extra computational overhead. Simulation results have demonstrated that the proposed DRL agent achieves the reflection coefficient below 0.01 for 96.73\% of test samples, outperforming GA and AD-Adam while being competitive with SAPSO in accuracy.
Notably, the proposed agent requires significantly less tuning time than the three baseline methods. 
Furthermore, with a test-phase exploration rate of only 10\%, the agent surpasses SAPSO in terms of tuning accuracy, speed, and stability, achieving a reflection coefficient below 0.01 for 99.9\% of test samples, thereby validating the effectiveness of the proposed matching method.

\IEEEpubidadjcol


\bibliographystyle{IEEEtran}
\bibliography{reference}
\end{document}